\newtheorem{theorem}{Theorem}
\newtheorem{lemma}{Lemma}
\newcommand{\ws}[1]{\textcolor{black}{#1}}
\begin{document}
\title{A Localization Method Avoiding Flip Ambiguities for micro-UAVs with Bounded Distance Measurement Errors}

\author{
\IEEEauthorblockN{Qingbei~Guo\IEEEauthorrefmark{1}\IEEEauthorrefmark{2},
                  Yuan~Zhang\IEEEauthorrefmark{1},~\IEEEmembership{Senior Member,~IEEE},
                  Jaime Lloret\IEEEauthorrefmark{3},~\IEEEmembership{Senior Member,~IEEE},
                  Burak Kantarci\IEEEauthorrefmark{4},~\IEEEmembership{Senior Member,~IEEE},
                  Winston K.G. Seah\IEEEauthorrefmark{5},~\IEEEmembership{Senior Member,~IEEE}}

\IEEEauthorblockA{\IEEEauthorrefmark{1}Shandong Provincial Key Laboratory of Network based Intelligent Computing, University of Jinan, Jinan, Shandong, China\\
Email: \{ise\_guoqb, yzhang\}@ujn.edu.cn\\
\IEEEauthorrefmark{1}Corresponding author}

\IEEEauthorblockA{\IEEEauthorrefmark{2}Jiangsu Provincial Engineering Laboratory of Pattern Recognition and Computational Intelligence, Jiangnan University, Wuxi 214122, China}

\IEEEauthorblockA{\IEEEauthorrefmark{3}Integrated Management Coastal Research Institute, Universidad Polit¡äecnica de Valencia, Spain}

\IEEEauthorblockA{\IEEEauthorrefmark{4}School of Electrical Engineering and Computer Science, University of Ottawa, Canada}

\IEEEauthorblockA{\IEEEauthorrefmark{5}School of Engineering and Computer Science Victoria University of Wellington, New Zealand}
\thanks{IEEE Transactions on Mobile Computing, Submitted Apr 2, 2017 Accepted Aug 7, 2018}}

\markboth{Decision on TMC-2017-04-0233.R2: Accept as regular paper}%
{Shell \MakeLowercase{\textit{et al.}}: Bare Demo of IEEEtran.cls for IEEE Journals}

\maketitle

\begin{abstract}
Localization is a fundamental function in cooperative control of micro unmanned aerial vehicles (UAVs), but is easily affected by flip ambiguities because of measurement errors and flying motions. This study proposes a localization method that can avoid the occurrence of flip ambiguities in bounded distance measurement errors \ws{and constrained flying motions}; to demonstrate its efficacy, the method is implemented on bilateration and trilateration. For bilateration, an improved bi-boundary model based on the unit disk graph model is created to compensate for the shortage of distance constraints, and two boundaries are estimated as the communication range constraint. The characteristic of the intersections of the communication range and distance constraints is studied to present a unique localization criterion which can avoid the occurrence of flip ambiguities. Similarly, for trilateration, another unique localization criterion for avoiding flip ambiguities is proposed according to the characteristic of the intersections of three distance constraints. The theoretical proof shows that these proposed criteria are correct. A localization algorithm is constructed based on these two criteria. The algorithm is validated using simulations for different scenarios and parameters, and the proposed method is shown to provide excellent localization performance in terms of average estimated error. \ws{Our code can be found at: 
\url{https://github.com/QingbeiGuo/AFALA.git}}.
\end{abstract}

\begin{IEEEkeywords}
bi-boundary model, flip ambiguity, bilateration
\end{IEEEkeywords}

\IEEEpeerreviewmaketitle

\section{Introduction}
\IEEEPARstart{T}{}he use of multiple unmanned aerial vehicles (UAVs) has become very popular in civil and military applications. As a key technical problem, localization of multiple mirco-UAVs finds its application in location tracking, formation flight and cooperative mission, etc. As the most frequently used localization method, GPS suffers from large location errors of 10-30m on average. Therefore, different techniques have been proposed to address the localization problem \cite{Youssef05,Cheng04,Patwari03,Potdar09}. Localization techniques, which are based on distance measurement (e.g., trilateration and bilateration), have always attracted significant research interest \cite{Carter81,Rappaport96,Bernhardt87,Lee02,Gezici05}. However, flip ambiguity (FA) is one of the major problems of these localization techniques, especially in the presence of measurement errors \cite{Severi09,Wang11,Liu14,Yang16,Zhang12_2}. The measurement error between the measurement distance and the true distance always occur because of external environment noise \cite{Carter81,Rappaport96,Bernhardt87,Lee02,Gezici05}, that easily causes FA and results in an incorrect estimate \cite{Niculescu04} \cite{Moore04} \cite{Kannan10}. Moreover, once a FA has occurred, it not only affects the current localization but also causes erroneous results for subsequent localizations.

\ws{Global rigidity is widely used for localization to ensure that the result is unique \cite{Zhang12_1,Aspnes06,Anderson10}.} A network can be uniquely located if and only if its corresponding graph is globally rigid \cite{Eren04}. However, even if the network is globally rigid in the presence of errors, FA may still occur. In trilateration, the three measured distances may locate their connected node to a false side, which causes FA due to errors in the measurements \cite{Akcan13,Kannan08}.

Bilateration only requires network rigidity conditions to be met, not global rigidity \cite{Fang06,Goldenberg06,Yang09}, and reduces the reliance on high node density. However, FA in bilateration deserves more attention. Given that only two distances exist, the localization conditions are so insufficient that additional localization constraints are required. The communication range constraint of the nodes is widely adopted as an additional localization constraint and therefore has an important role in determining the final location from the candidate locations. More importantly, an incorrect choice inevitably results in the occurrence of FA so that the estimation of the communication range of a node becomes very important.

The unit disk graph \cite{Olivia15,Kaewprapha11,Kuhn04} is commonly used to model the communication range of a wireless node. Each node has a single circle with a radius equal to a fixed value, and two nodes are connected if the distance between them is below the specified threshold. However, in reality, the boundary between reachable and unreachable areas cannot be clearly defined most of the time. Communication range between nodes is impeded by external conditions, for instance, buildings in a city \cite{Iwashige15}, and internal factors, such as its energy availability especially after operating for long duration. Therefore, the fixed communication range assumption has its limitation in localization.

Our proposed localization method is intended to address the aforementioned problems. \ws{In our method, we assume the distance measurement errors to be bounded \cite{Shi17} and the flying motions to be constrained \cite{Hu04}. An improved bi-boundary model of the communication range is proposed as localization constraints based on the model of unit disk graph. The new model depends on only distances and connectivity, and calculates the double boundaries of communication ranges through the bounded measurement error.} Given that both bilateration and trilateration are analyzed, every constraint is regarded as a possible location region. The intersections of two distance constraints form two possible localization regions first, then the third constraint (bi-boundary and distance constraint for bilateration and trilateration, respectively) is used to eliminate one region, which causes FA, by analyzing the characteristic of their intersections. Therefore, the remaining region must contain the true location and the estimated location without the possibility of FA.

Accordingly, the main contributions of this paper are as follows:
\begin{itemize}
\item An improved bi-boundary model based on the unit disk graph model is presented by analyzing the connectivity characteristics of a wireless node.
\item Based on the double constraints of distance and bi-boundary, a localization criterion that avoids FA in bilateration is proposed.
\item Another localization criterion that avoids FA in trilateration is also developed.
\item A localization algorithm based on the above two localization criteria, which dramatically improves the location accuracy, is constructed and evaluated through extensive simulations.
\end{itemize}

The rest of this paper is organized as follows:
Section \ref{sec:RelatedWork} introduces the related work. Section \ref{sec:ProblemFormulation} formulates the localization problem. Section \ref{sec:LocalizationCriteriaAlgorithm} presents an improved bi-boundary model based on the unit disk graph model, describes the localization criteria that avoid FA in bilateration and trilateration, and provides their proofs. Based on these localization criteria, the localization algorithm is developed. In section \ref{sec:PerformanceValidation}, the localization algorithm is validated through comprehensive simulation. Finally, section \ref{sec:Conclusion} presents the conclusion.

\section{Related Work}\label{sec:RelatedWork}

Various works have investigated the localization and the phenomenon of FA that may hamper the unique localization of distributed nodes, and have proposed different localization methods for reducing or avoiding FA. These methods can be classified into two categories: methods based on global and non-global rigidity properties which take trilateration and bilateration as representatives, respectively. Further details of these methods are provided below.

A unique and anchor-free localization algorithm, which also resolves the FA problem in its second step, has been proposed by Zhang \textit{et al.} \cite{Zhang12_2}. A novel combination of distance and direction estimation technique is introduced to detect and estimate ranges between neighbors. Using both distance and angle information, we construct unidirectional local coordinate systems to avoid the reflection ambiguity. However, angle measurement is too expensive to be imbedded in most nodes, leading to few practical applications and inspiring our development of a new localization algorithm to solve the FA issue.

Trilateration is the most widely used localization method based on distance measurement \cite{Eren04,Aspnes06}. The estimated location of a node is determined using measured distances to three other nodes that have known locations and are not collinear. Trilateration without measurement errors is uniquely localizable; however, trilateration with measurement errors tends to suffer from the effect of FA \cite{Kannan08,Evrendilek11}.

To prevent incorrect localization caused by FA, a robust quadrilateral was used to perform trilateration in \cite{Moore04,Savarese02,Sittile08,Hendrickson92}. Moore \textit{et al.}\cite{Moore04} introduced the notion of robust quadrilateral, which is a fully connected quadrilateral with global rigid properties, and used it to reduce the probability of FA. They found all possible robust quadrilaterals and then realized global localization by overlapping any two robust quadrilaterals that have three common nodes. The simulation results showed that the aforementioned method is suitable for the localization of high node density, because it depends on these robust quadrilaterals that require a high node density to meet the feature of global rigidity.

Kannan \textit{et al.} \cite{Kannan10} pointed out that a node may be estimated at a flipped location caused by measurement errors in a globally rigid graph when its three neighboring nodes with known location are nearly collinear. They proposed a robust criterion based on the robust quadrilateral to calculate the probability of FA, to eliminate all the locations that might have caused it, and to improve localization performance. The simulations show that, compared with the robust quadrilateral, this method decreases not only the average estimation error but also the average number of localized nodes resulting from the robust criterion, which requires more constraints than the robust quadrilateral. Although more constraints can avoid problems of FA, they also limit the location of more nodes.

Akcan \textit{et al.} \cite{Akcan13} proposed a heuristic solution based on the notion of a ``safe-triangle" to mitigate the problem of FA in a trilateration network with range noise. First proposed here, a safe-triangle is formed by three nodes with known location, where the distance of any node to the line passing through two other nodes is larger than a set threshold. If a triangle is not a safe-triangle, its nodes are unable to provide trilateration for other nodes. The main aim is to minimize the number of FA. The simulation results show that the algorithm can achieve better performance than trilateration. However, similar to trilateration, the safe-triangle algorithm also requires a high node density.

To reduce localization dependence on global rigidity, many algorithms \cite{Fang06,Goldenberg06,Olivia15} are proposed for network localization using bilateration which requires rigid conditions but not necessarily globally rigid. Based on \cite{Fang06}, Goldenberg \textit{et al.} \cite{Goldenberg06} proposed the ``Sweeps" algorithm for network localization using bilateration. The Sweeps algorithm achieves one of the best performance in terms of number of localized nodes using bilateration. However, the algorithm requires a bilateration ordering to exist in the network, a condition which does not always exist in a sparse network; thus, it may fail in many localizable networks \cite{Goldenberg06,Yang09}.

Oliva \textit{et al.} \cite{Olivia15} introduced a model of shadow edges to extend trilateration to bilateration, which can produce effective solutions despite the lack of localization condition. The algorithm shows better performance than trilateration, and is able to localize the network even when trilateration fails. However, this method is based on unit disk graphs and only limited to noiseless environments.

The objective of our research is to avoid the occurrence of FA and achieve the accurate localization network using trilateration and bilateration with the errors including bounded measurement errors and constrained motions. A localization algorithm is proposed based on two classes of localization criteria that avoid the FA problem. In our algorithm, the two classes of localization criteria are derived by gradually analyzing the intersection characteristic of all the constrains. Then, trilateration and bilateration are combined to take advantage of the best of them, thus making them applicable in various kinds of networks that are sparse or not.

\section{Problem Formulation}\label{sec:ProblemFormulation}
\ws{Without loss of generality, the localization problem of a micro-UAV network modelled as a mobile sensor network can be formulated in 2D space \cite{Zhang10,Baggio08}.} Multiple micro-UAVs, which consists of a set of $n$ nodes, denoted as $s_1$ to $s_n$, are flying in a physical region. Each node has a limited communication range and establishes a wireless link with a neighboring node, which is called its neighboring node only if they are within the communication range of each other. A node is assumed to be capable of estimating the distance to the neighboring node using distance measurement technology. Given the constraints on energy consumption and implementation environment, most nodes do not know their locations except anchors which can obtain their own locations by using GPS. In this study, the micro-UAV network is analyzed without anchor nodes. The communication range of node $s_i$ is denoted by $r_i$, and its estimated communication range by $\hat{r}_i$, its true location by $p_i$, and its estimated location by $\hat{p}_i$. The true distance between any two neighboring nodes $s_i$ and $s_j$ is denoted by $d_{ij}$, and the measured distance by $\hat{d}_{ij}$, where $i$, $j$ = 1,2,\ldots, $n$.

\ws{For a connected network, we assume that the distances between neighboring nodes can be acquired in a time unit \cite{Keung10}. The speed of a node $s_i$ is the distance travelled by the node per time unit \cite{Hu04}, denoted as $d_i$, and $d_{max}$ is the maximum distances travelled by these nodes in a time unit. Therefore, in each time unit, the micro-UAV network can be described by a model of undirected graph $G_t$ = ($V$, $E$) with a nonempty vertex set $V$ = \{1,2,\ldots, $n$\} and edge set $E$, where each vertex $i \in V$ uniquely represents a node $s_i$, and each edge $e(i,j)$ $\in$ $E$ is uniquely associated with a node pair ($s_i$, $s_j$), for which $s_i$ and $s_j$ are neighbors, and $\hat{d}_{ij}$ is known. The measurement error of $e(i,j)$ is denoted by $e_{ij}$ such that $e_{ij} = d_{ij} - \hat{d}_{ij}$, and $e_{max}$ is the maximum measurement errors.} \ws{The network topology can and will change in the different time units because of the measurement errors and the flying trajectories. However, during the localization process}, $s_i$ has a possible localization region set $R_i$ = \{$r_{i1}$, $r_{i2}$\} and corresponding candidate location set $P_i$ = \{$p_{i1}$, $p_{i2}$\} because of the FA phenomenon. While the true location $p_i$ is difficult to be determined due to the errors consisting of the measurement errors and the flying distances, $R_i$ and $P_i$ can be determined by the measured distances and the errors. FA occurs when the estimated location and the true location are not in the same localization region. Therefore, to determine the true localization region becomes a key problem in deciding whether FA occurs. By gradually analyzing the characteristic of the intersections corresponding to all the constraints, our method eliminates the localization region in which each location causes FA instead of choosing which one contains the true location. Therefore, the remaining region is identified as the true localization region, and the corresponding candidate location is regarded as the estimated location with no possibility of FA.

\section{Localization Criteria and Localization Algorithm}\label{sec:LocalizationCriteriaAlgorithm}

\subsection{Bi-boundary Model of Communication Range Constraint}

In this section, an improved bi-boundary model based on the unit disk graph is presented to model the topology of wireless sensor networks with the errors which consist of the bounded measurement errors and the constrained motions. Now, we first focus on the effect of only the measurement errors, and the flying motions will be introduced in Section \ref{sec:LocalizationCriteriaAlgorithm}.D. In this bi-boundary model, each node has double concentric circles with two different radii, and the space between two circles represents \ws{an uncertainty in the communication range}. The inner and outer boundaries represent the lower limit of the reachable range and the upper limit of the unreachable range, \ws{respectively}; thus, this bi-boundary design is more suitable for the actual environment. The calculation of the two boundaries is independent of the measuring technique and instead completely depends on the knowledge of the distances, the measurement errors and connectivity between nodes, without maintaining the individual parameter of the communication range. The process of estimating the upper and lower boundaries of a wireless node is given in detail below.


\ws{Let $\varepsilon > 0$ be a threshold of the distance variation, such that $|e_{ij}| = |d_{ij} - \hat{d}_{ij}| \leq \varepsilon = e_{max}$.} Thus,
\begin{equation}
\label{eqn:01}
  d_{ij} \in [\hat{d}_{ij} - \varepsilon, \hat{d}_{ij} + \varepsilon].
\end{equation}

\noindent
Following the aforementioned analysis on the neighboring node and communication noise, for every node $s_i$, the estimated communication range $\hat{r}_i$ can be bounded as
\begin{equation}
\label{eqn:02}
  \hat{r}_i \geq max(D_i^1) - \varepsilon,
\end{equation}
where $D_i^k$ represents the measured distance set between the nodes that are at least \emph{k}-hop away from node $s_i$, where $k$ = 1,2,\ldots,$n$. In particular, $D_i^1$ = \{$\hat{d}_{ij}$ $|$ $e(i,j)$ $\in$ $E$\}, $D_i^2$ = \{$\hat{d}_{il}$+$\hat{d}_{lj}$ $|$ $e(i,j)$ $\notin$ $E$, $e(i,j)$ $\in$ $E$, $e(i,j)$ $\in$ $E$, and $l \neq i$, $l \neq j$\}.

\begin{figure}
  \centering
  \includegraphics[trim=0mm 5mm 0mm 5mm, width=2.02in]{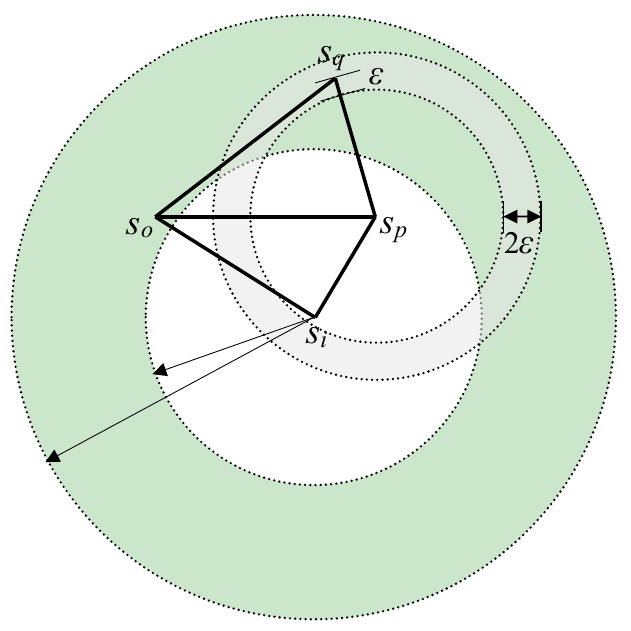}
  \caption{Example  of estimating the communication range. For node $s_i$, the minimum of the estimated communication range is $\hat{d}_{io} - \varepsilon$, and the maximum is $\hat{d}_{pi} + \hat{d}_{pq} + 2\varepsilon$.}
  \label{fig:1}
\end{figure}

To further analyze $r_i$, the scenario is illustrated in Fig. \ref{fig:1}. Given the unknown location of $s_i$ which is connected with both $s_o$ and $s_p$ but is not connected with $s_q$, $\hat{d}_{iq}$ cannot be directly calculated using $p_i$ and $p_q$; thus, it cannot contribute to the estimation of $r_i$. The sum of any two sides of a triangle is greater than the third side, so that $r_i$ is less than the sum of any $2$-hop distance from $s_i$ to $s_q$. Therefore, considering the factor of the measurement errors, the following condition must be satisfied:
\begin{equation}
\label{eqn:03}
  \hat{r}_i < min(D_i^2) + 2\varepsilon.
\end{equation}

\noindent
Finally, the estimated communication range $\hat{r}_i$, which satisfies equations (\ref{eqn:02}) and (\ref{eqn:03}), is bounded as follows:
\begin{equation}
\label{eqn:04}
  max(D_i^1) - \varepsilon \leq \hat{r}_i < min(D_i^2) + 2\varepsilon.
\end{equation}

\noindent
The connection property of a node therefore affects its estimated result. Moreover, higher node density, which increases the degree of node connectivity, makes the estimate more precise.

\subsection{FA Avoidance Criteria for Localization using Bilateration}

\begin{figure*}
\begin{minipage}{1\textwidth}
  \centering
  \subfigure[Noraml case]{
  \includegraphics[trim=0mm 0mm 0mm 0mm, width=1.66in]{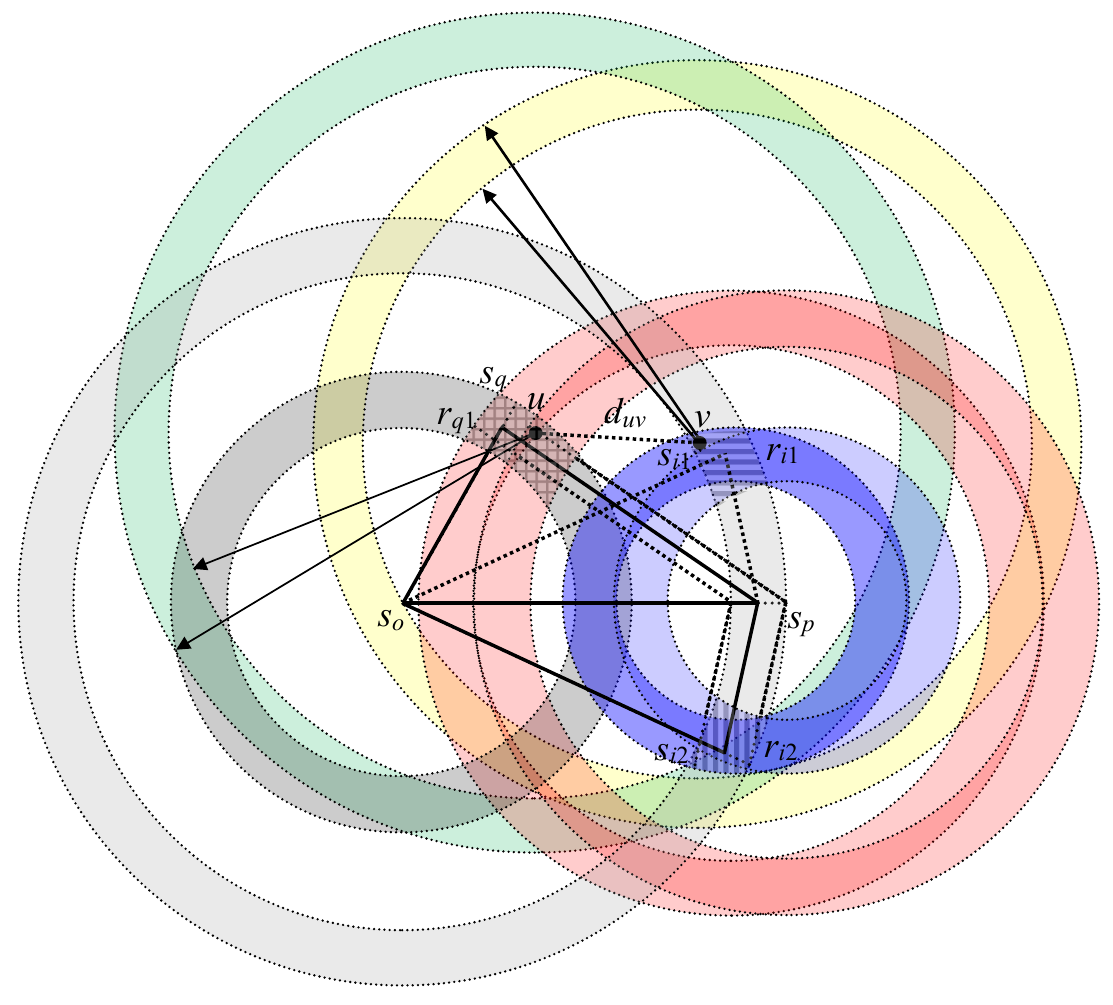}}
  \hspace{0in}
  \subfigure[Near-collinear case]{
  \includegraphics[trim=0mm 0mm 0mm 0mm, width=1.66in]{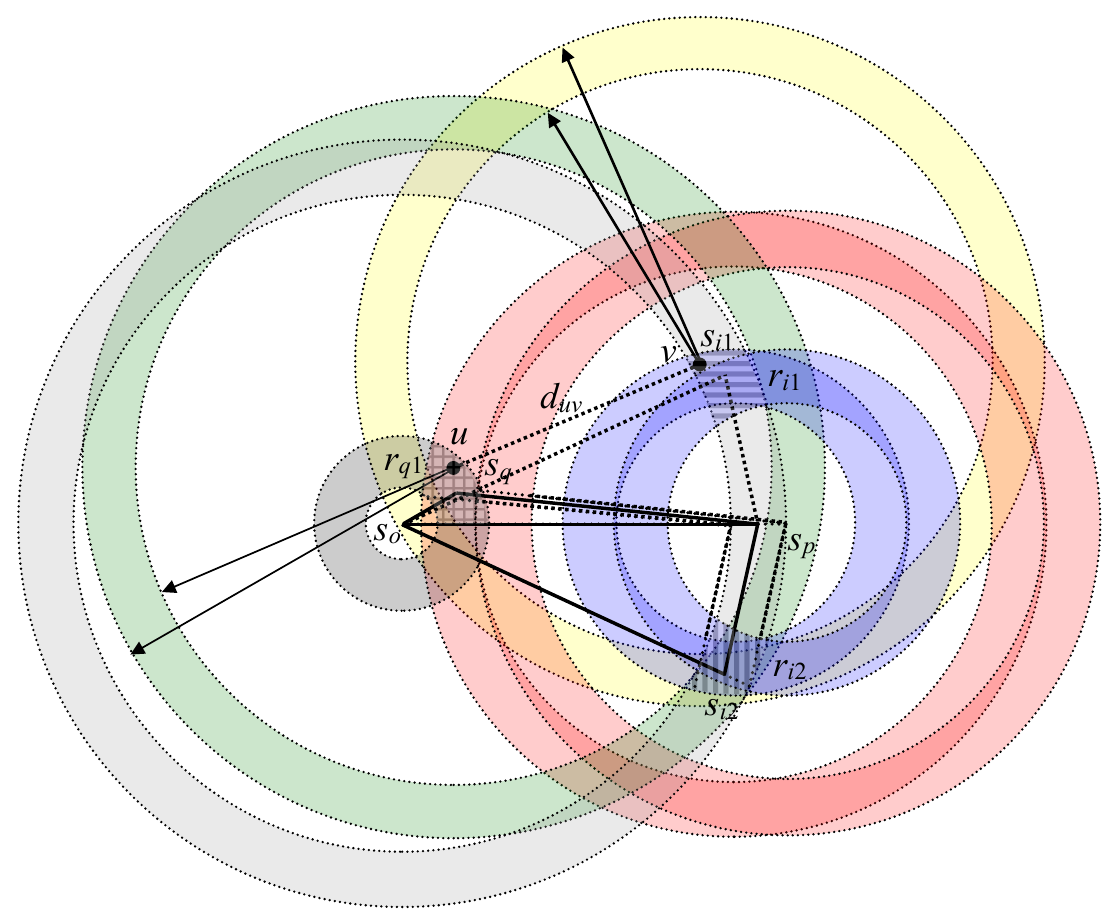}}
  \hspace{0in}
  \subfigure[Overlapping case]{
  \includegraphics[trim=0mm 0mm 0mm 0mm, width=1.66in]{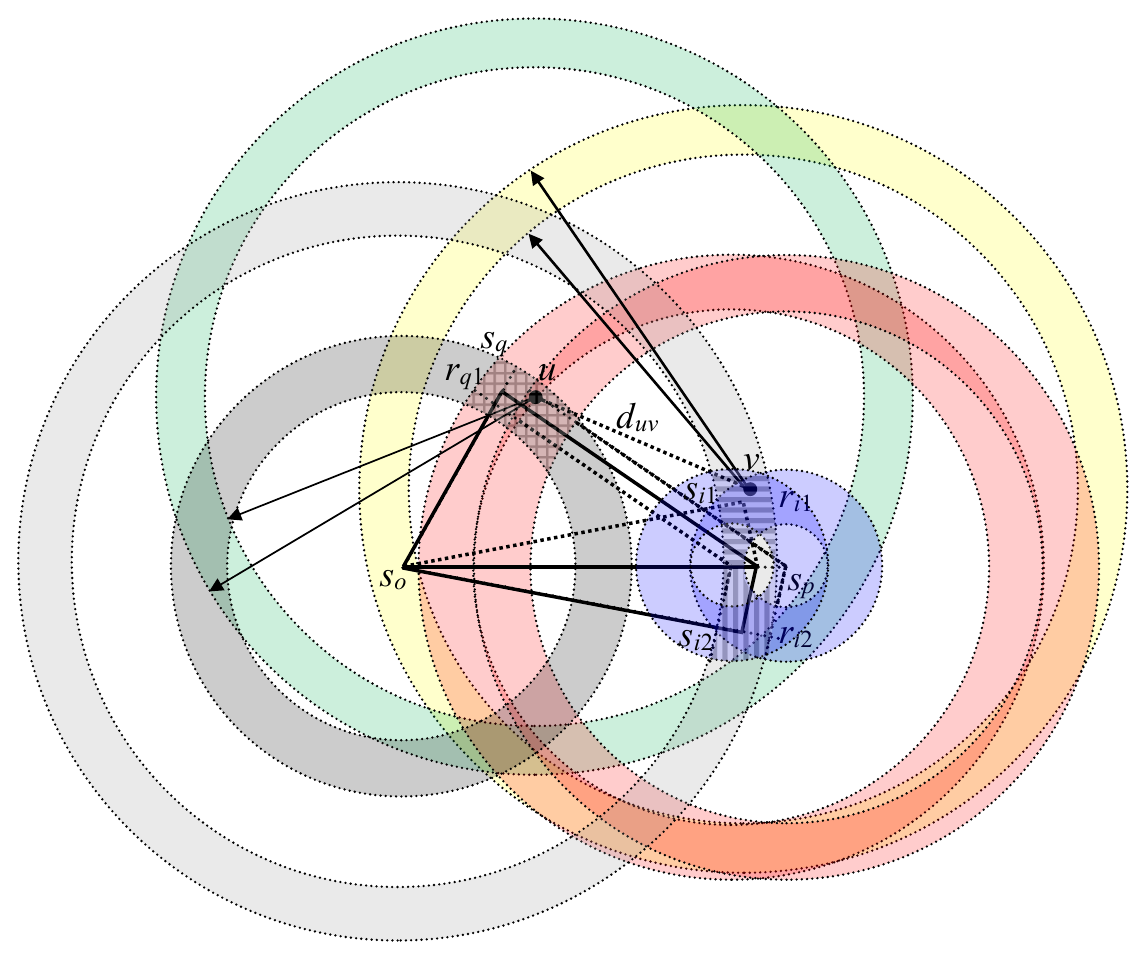}}
  \hspace{0in}
  \subfigure[Both near-collinear case and overlapping case]{
  \includegraphics[trim=0mm 0mm 0mm 0mm, width=1.66in]{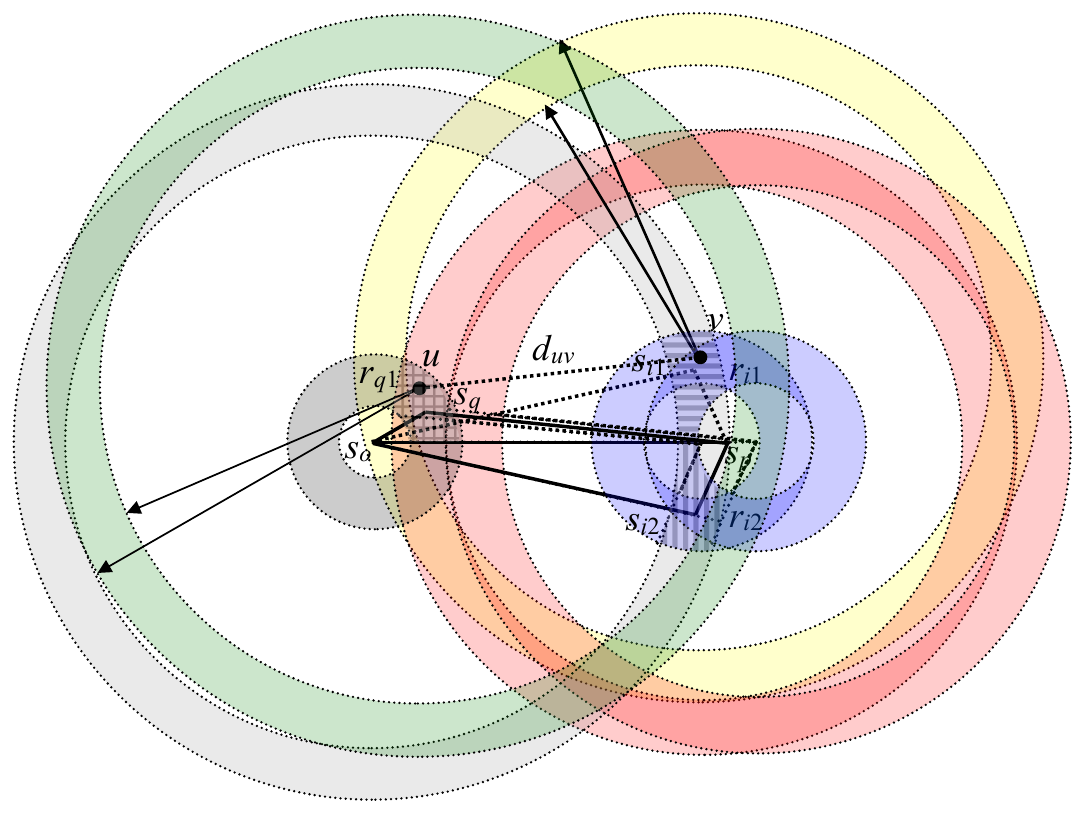}}
  {\caption{Localization without flip ambiguities using bilateration.}
  \label{fig:2}}
\end{minipage}
\end{figure*}

In this section, we introduce the scenario of using bilateration and present a localization criteria for avoiding FA using the bi-boundary and distance constraints. Fig. \ref{fig:2} represents the corresponding graph of localization without flip ambiguities using bilateration. Fig. \ref{fig:2} (a) shows a normal case without near-collinear and overlapping occurrences. Fig. \ref{fig:2} (b) and (c) depict a near-collinear case and an overlapping case, respectively. Moreover, a compositive case with near-collinear and overlapping occurrences is shown in Fig. \ref{fig:2} (d). For each case above, we will follow the same process because our proposed localization method adapts to all the cases. The location of node $s_o$ is assumed to be the origin, and another node $s_p$ is selected to form the positive $x$-axis, and the third node $s_q$ is located in the upper half-plane. The intersections of the black annulus and the red annuli form the upper region $r_{q1}$, which is the possible localization region of $s_q$. The possible localization regions of $s_i$ are the two regions $r_{i1}$ and $r_{i2}$ consisting of the intersections of the gray annulus and the blue annuli. The three regions of $r_{r1}$, $r_{r2}$ and $r_{q1}$ are highlighted 
with horizontal stripes, vertical stripes and mesh, respectively. $u$ and $v$ are any points of $r_{q1}$ and $r_{i1}$, respectively, and their communication ranges are denoted by the green annulus and the yellow annulus, respectively. Obviously, both $u$ and $v$ are within the communication range of each other, and hence they must be adjacent. Furthermore, each point of $r_{i1}$ clearly is within the communication range of $u$; thus, for $u$, the whole region $r_{i1}$ is not a possible localization region of $s_i$. Note that due to the uncertainty of the true location $p_p$, there exist countless red annuli and blue annuli whose centers are at every point in between $(\hat{d}_{op} - \varepsilon, 0)$ and $(\hat{d}_{op} + \varepsilon, 0)$, but only two red annuli and two blue annuli at both ends are shown to simplify the figure.

The true distances $d_{op}$, $d_{oq}$ and $d_{pq}$ among the three nodes can be bounded as follows:
\begin{equation}
\label{eqn:05}
  d_{op} \in [\hat{d}_{op} - \varepsilon, \hat{d}_{op} + \varepsilon],
\end{equation}
\begin{equation}
\label{eqn:06}
  d_{oq} \in [\hat{d}_{oq} - \varepsilon, \hat{d}_{oq} + \varepsilon],
\end{equation}
\begin{equation}
\label{eqn:07}
  d_{pq} \in [\hat{d}_{pq} - \varepsilon, \hat{d}_{pq} + \varepsilon].
\end{equation}

\noindent
It is obvious that the true location $p_p$ of $s_p$ is between $(\hat{d}_{op} - \varepsilon, 0)$ and $(\hat{d}_{op} + \varepsilon, 0)$. The true location $p_q$ of node $s_q$ is calculated based on $p_o$, $p_p$, $d_{oq}$ and $d_{pq}$. Based on equations (\ref{eqn:06}) and (\ref{eqn:07}), the two close intervals form two annulus regions, which can be defined using $\hat{d}_{oq}$ and $\hat{d}_{pq}$ as
\begin{equation}
\begin{aligned}
\label{eqn:08}
  r_{oq}& = \{(x,y) | \\
  &(\hat{d}_{oq} - \varepsilon)^2 \leq (x-x_o)^2 + (y-y_o)^2 \leq (\hat{d}_{oq} + \varepsilon)^2\},
\end{aligned}
\end{equation}
\begin{equation}
\begin{aligned}
\label{eqn:09}
  r_{pq}& = \{(x,y) | \\
  &(\hat{d}_{pq} - \varepsilon)^2 \leq (x-x_p)^2 + (y-y_p)^2 \leq (\hat{d}_{pq} + \varepsilon)^2\},
\end{aligned}
\end{equation}
where $(x_o,y_o)$ and $(x_p,y_p)$ are the location coordinates of $s_o$ and $s_p$, respectively. Since $s_o$ is assumed to be the origin, equation (\ref{eqn:08}) can be also written as follows:
\begin{equation}
\label{eqn:10}
  r_{oq} = \{(x,y) | (\hat{d}_{oq} - \varepsilon)^2 \leq x^2 + y^2 \leq (\hat{d}_{oq} + \varepsilon)^2\}.
\end{equation}

\noindent
However, as equation (\ref{eqn:05}) shows, the true location $p_p$ is uncertain. Thus, equation (\ref{eqn:09}) can be also written as follows:
\begin{equation}
\label{eqn:11}
\begin{aligned}
  r_{pq} = \{&(x,y)~| \\
   ((\hat{d}_{pq} & - \varepsilon)^2 \leq (x - \hat{d}_{op} + \varepsilon)^2 + y^2 \leq (\hat{d}_{pq} + \varepsilon)^2) \vee \\
   ((\hat{d}_{pq} & - \varepsilon)^2 \leq (x - \hat{d}_{op} - \varepsilon)^2 + y^2 \leq (\hat{d}_{pq} + \varepsilon)^2)\}.
\end{aligned}
\end{equation}

\noindent
The intersections of $r_{oq}$ and $r_{pq}$ form the two regions $r_{q1}$ and $r_{q2}$. The two regions are certainly symmetrical with respect to the edge $e(o,p)$, but they can be adjacent or nonadjacent in various situations \cite{Kannan10}. To ensure that they are separated, the edge $e(o,p)$ is added as the boundary between them. The half-plane that contains $s_q$ is denoted by $H$, and the complementary half-plane is denoted by $H'$. Hence, $H$ and $H'$ can be written as follows:
\begin{equation}
\label{eqn:12}
  H = \{(x,y)|y > 0\},
\end{equation}
\begin{equation}
\label{eqn:13}
  H' = \{(x,y)|y \leq 0\}.
\end{equation}

\noindent
Thus, the possible localization region of node $s_q$ can be defined as follows:
\begin{equation}
\label{eqn:14}
  r_{q1} = \{(x,y) | (x,y) \in r_{oq} \cap r_{pq} \cap H\}.
\end{equation}

Similarly, this method can also be extended to the analysis of the possible localization regions of $s_i$. The two annulus regions related with $s_i$ can be defined as follows:
\begin{equation}
\label{eqn:15}
  r_{oi} = \{(x,y) | (\hat{d}_{oi} - \varepsilon)^2 \leq x^2 + y^2 \leq (\hat{d}_{oi} + \varepsilon)^2\},
\end{equation}
\begin{equation}
\label{eqn:16}
\begin{aligned}
  r_{pi} = \{&(x,y) | \\
  ((\hat{d}_{pi} & - \varepsilon)^2 \leq (x - \hat{d}_{op} + \varepsilon)^2 + y^2 \leq (\hat{d}_{pi} + \varepsilon)^2) \vee \\
  ((\hat{d}_{pi} & - \varepsilon)^2 \leq (x - \hat{d}_{op} - \varepsilon)^2 + y^2 \leq (\hat{d}_{pi} + \varepsilon)^2)\}.
\end{aligned}
\end{equation}

\noindent
Thus, the two possible localization regions of $s_i$ can be defined as follows:
\begin{equation}
\label{eqn:17}
  r_{i1} = \{(x,y) | (x,y) \in r_{oi} \cap r_{pi} \cap H\},
\end{equation}
\begin{equation}
\label{eqn:18}
  r_{i2} = \{(x,y) | (x,y) \in r_{oi} \cap r_{pi} \cap H'\}.
\end{equation}

\noindent
The true location of node $s_i$ is in one of the two regions $r_{i1}$ or $r_{i2}$. To obtain an unambiguous estimated location, additional knowledge is required to determine which region contains the true location. In this paper, the communication ranges of the nodes $s_q$ and $s_i$ are used to address the ambiguity problem. According to the aforementioned bi-boundary model, the communication ranges of $s_q$ and $s_i$ can be bounded as follows:
\begin{equation}
\label{eqn:19}
  r_q \in [max(D_q^1) - \varepsilon, min(D_q^2) + 2\varepsilon],
\end{equation}
\begin{equation}
\label{eqn:20}
  r_i \in [max(D_i^1) - \varepsilon, min(D_i^2) + 2\varepsilon].
\end{equation}

\noindent
Hence, the maximum communication distance, within which $s_q$ and $s_i$ can certainly communicate with each other, is defined by:
\begin{equation}
\label{eqn:21}
  D_{qi} = min(max(D_q^1) - \varepsilon, max(D_i^1) - \varepsilon).
\end{equation}

Lastly, Theorem 1 proves whether node $s_i$ can be uniquely localizable in $\triangle opq$.

\begin{theorem}
\label{thm:01}
Given a localized triangle $\triangle opq$, where its three localized nodes $s_o$, $s_p$ and $s_q$ are adjacent to each other, and an unknown node $s_i$, such that the edges $e(o,i)$ and $e(p,i)$ exist, but $e(q,i)$ does not exist. \ws{Let $\varepsilon = e_{max} > 0$ be a threshold of the distance measurement errors.} $s_i$ can be a unique localization, and its sufficient conditions are as follows:
\begin{enumerate}[\indent(1)]
\item $\forall x\in r_{q1}$, $\forall y\in r_{i1}$, $d_{xy} < D_{qi}$,
\item $\exists u\in r_{q1}$, $\exists v\in r_{i2}$, $d_{uv} > D_{qi}$.
\end{enumerate}
\end{theorem}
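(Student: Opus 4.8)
The plan is to argue by elimination: I will show that condition~(1) forces the true location $p_i$ out of the region $r_{i1}$, so that $p_i$ must lie in $r_{i2}$, leaving a single admissible localization region and hence no flip ambiguity. The engine of this elimination is the observation that $D_{qi}$ is a \emph{guaranteed} communication distance. First I would make the connection semantics precise: nodes $s_q$ and $s_i$ are adjacent exactly when their separation lies within the communication range of each other, i.e.\ when $d_{qi} \le \min(r_q, r_i)$. Using the bi-boundary bounds (\ref{eqn:19}) and (\ref{eqn:20}), which give $r_q \ge \max(D_q^1) - \varepsilon$ and $r_i \ge \max(D_i^1) - \varepsilon$, I obtain $\min(r_q, r_i) \ge D_{qi}$. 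Consequently, whenever $d_{qi} < D_{qi}$ the two nodes are necessarily adjacent; equivalently, by contraposition, the hypothesis that $e(q,i)$ does \emph{not} exist yields $d_{qi} \ge D_{qi}$.

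Next I would exploit the fact that the true location $p_q$ of $s_q$ lies in $r_{q1}$ (its localization region in the half-plane $H$, by construction through (\ref{eqn:10})--(\ref{eqn:14})) and that the true location $p_i$ lies in $r_{i1} \cup r_{i2}$. Suppose, for contradiction, that $p_i \in r_{i1}$. Applying condition~(1) with $x = p_q \in r_{q1}$ and $y = p_i \in r_{i1}$ gives $d_{p_q p_i} < D_{qi}$, i.e.\ $d_{qi} < D_{qi}$; but this contradicts $d_{qi} \ge D_{qi}$ established above from the non-adjacency of $s_q$ and $s_i$. Hence $p_i \notin r_{i1}$, and therefore $p_i \in r_{i2}$. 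This identifies $r_{i2}$ as the unique localization region compatible with all the distance and connectivity constraints, so any estimate drawn from $r_{i2}$ lies in the same region as $p_i$ and the flip ambiguity is removed.

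Finally I would use condition~(2) to confirm that $r_{i2}$ is a genuinely feasible region rather than one that the same connectivity argument would also exclude. The existence of $u \in r_{q1}$ and $v \in r_{i2}$ with $d_{uv} > D_{qi}$ certifies that placing $s_i$ in $r_{i2}$ is consistent with the absence of the edge $e(q,i)$, so the elimination in the previous paragraph does not collapse into the degenerate situation where \emph{both} regions are forbidden. Together, (1) and (2) are thus sufficient for unique localization. The main obstacle I anticipate is not the elimination argument itself, which is short, but the careful justification of the inequality chain $d_{qi} < D_{qi} \Rightarrow d_{qi} < \min(r_q, r_i) \Rightarrow \text{adjacency}$: this is where the bi-boundary model does the real work, and one must be attentive to strict-versus-nonstrict inequalities at the boundary of the communication range so that the contrapositive $\neg\, e(q,i) \Rightarrow d_{qi} \ge D_{qi}$ is airtight.
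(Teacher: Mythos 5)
Your proposal is correct and follows essentially the same route as the paper's own proof: condition (1) plus the fact that $D_{qi}$ is a guaranteed communication distance (from the bi-boundary bounds (\ref{eqn:19})--(\ref{eqn:20})) excludes $r_{i1}$ as a possible location of the non-adjacent node $s_i$, leaving $r_{i2}$ as the unique localization region, with condition (2) ruling out degeneracy. The only difference is one of rigor: the paper states this elimination in a single terse sentence, whereas you make explicit the steps it leaves implicit, namely that $p_q \in r_{q1}$, the contrapositive $\neg\, e(q,i) \Rightarrow d_{qi} \ge D_{qi}$, and the resulting contradiction if $p_i \in r_{i1}$.
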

\begin{proof}
$r_{i1}$ and $r_{i2}$ are two possible localization regions of $s_i$ based on constraints of the two distance $\hat{d}_{oi}$ and $\hat{d}_{pi}$. If every pair of points between $r_{q1}$ and $r_{i1}$ are adjacent while at least one pair of points between $r_{q1}$ and $r_{i2}$ are beyond $D_{qi}$, then the whole region $r_{i1}$ is excluded as a possible localization region of $s_i$. Therefore, the region $r_{i2}$ is the only possible region which contains the true location $p_i$, and no possibility exists for the occurrence of FA. In this case, both nodes $s_i$ and $s_q$ are clearly on the opposite of the edge $e(o,p)$.
\end{proof}

\subsection{FA Avoidance Criteria for Localization using Trilateration}

\begin{figure*}
\begin{minipage}{1\textwidth}
  \centering
  \subfigure[Noraml case]{
  \includegraphics[trim=0mm 0mm 0mm 0mm, width=1.66in]{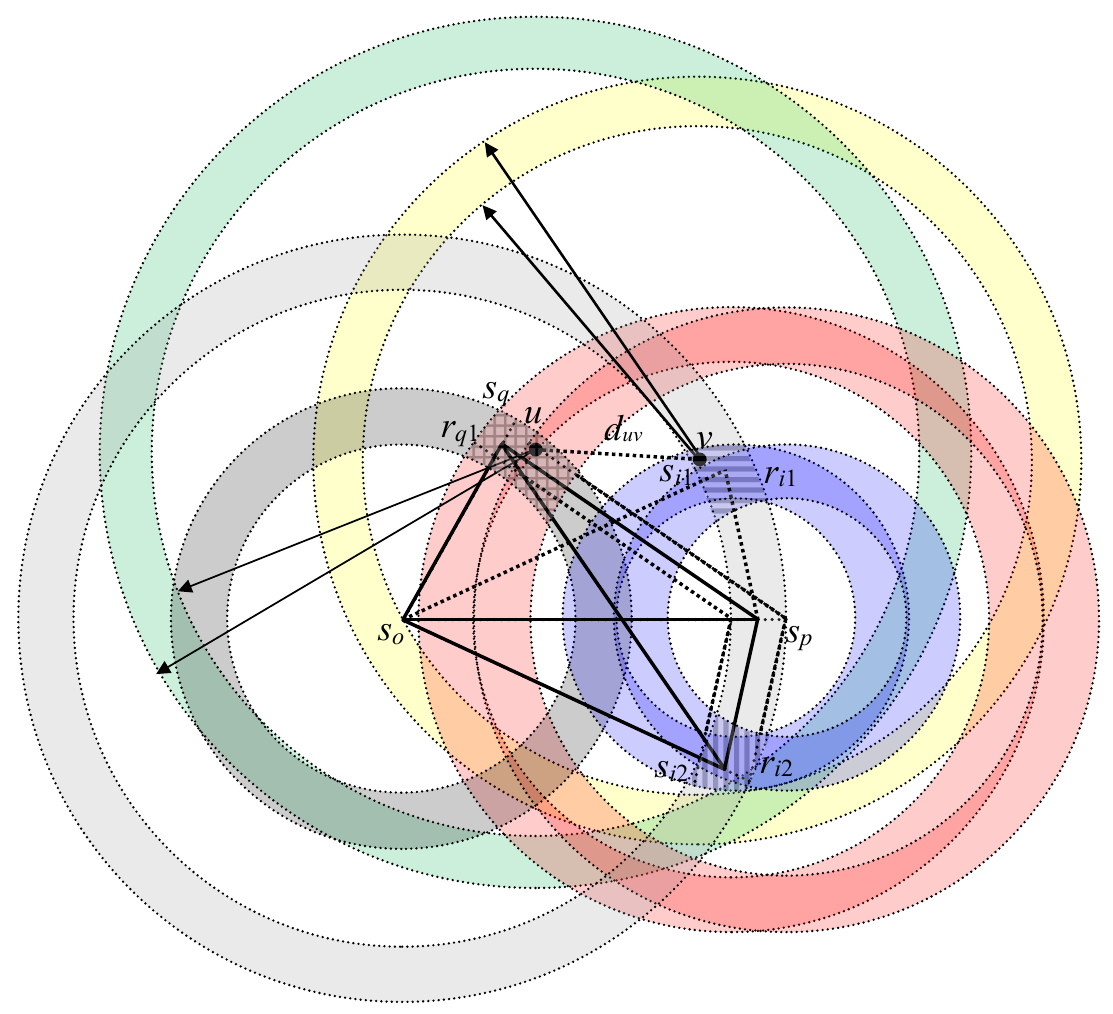}}
  \hspace{0in}
  \subfigure[Near-collinear case]{
  \includegraphics[trim=0mm 0mm 0mm 0mm, width=1.66in]{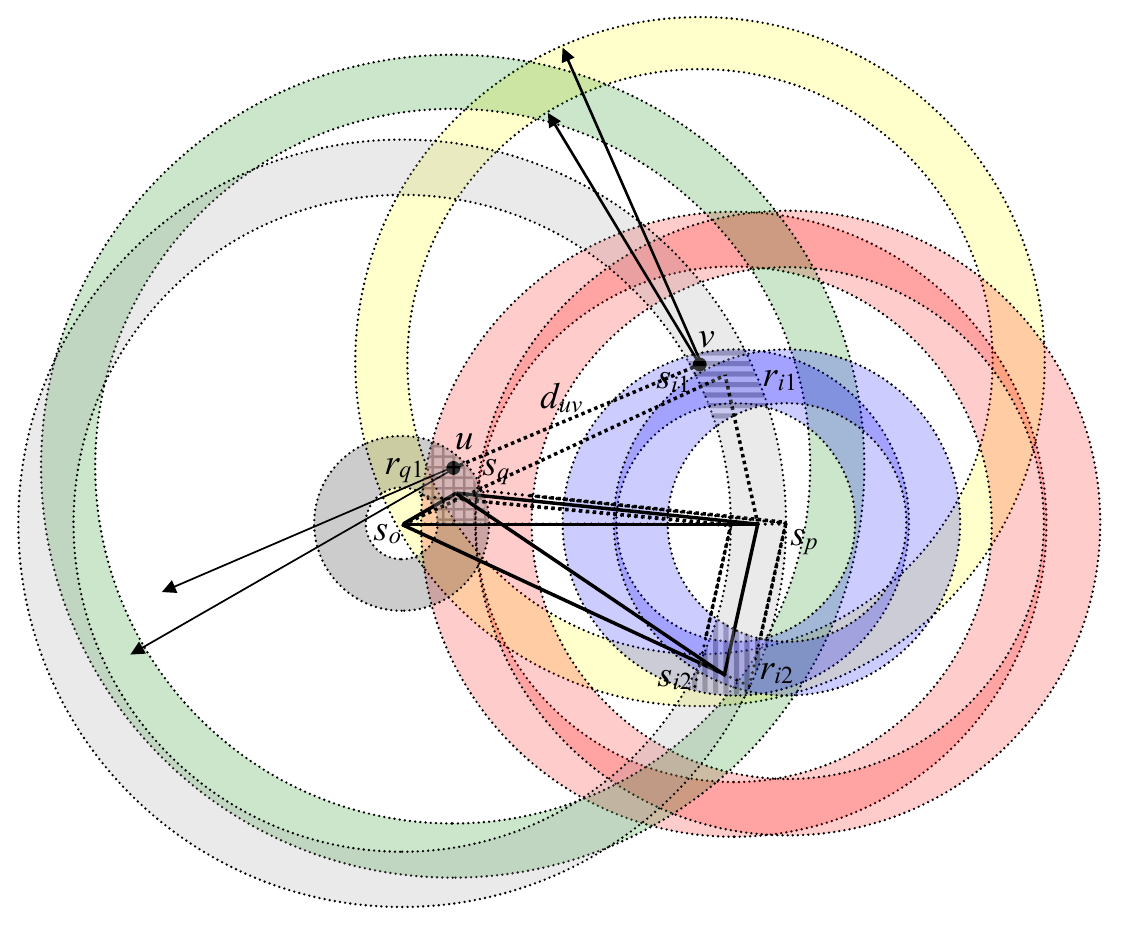}}
  \hspace{0in}
  \subfigure[Overlapping case]{
  \includegraphics[trim=0mm 0mm 0mm 0mm, width=1.66in]{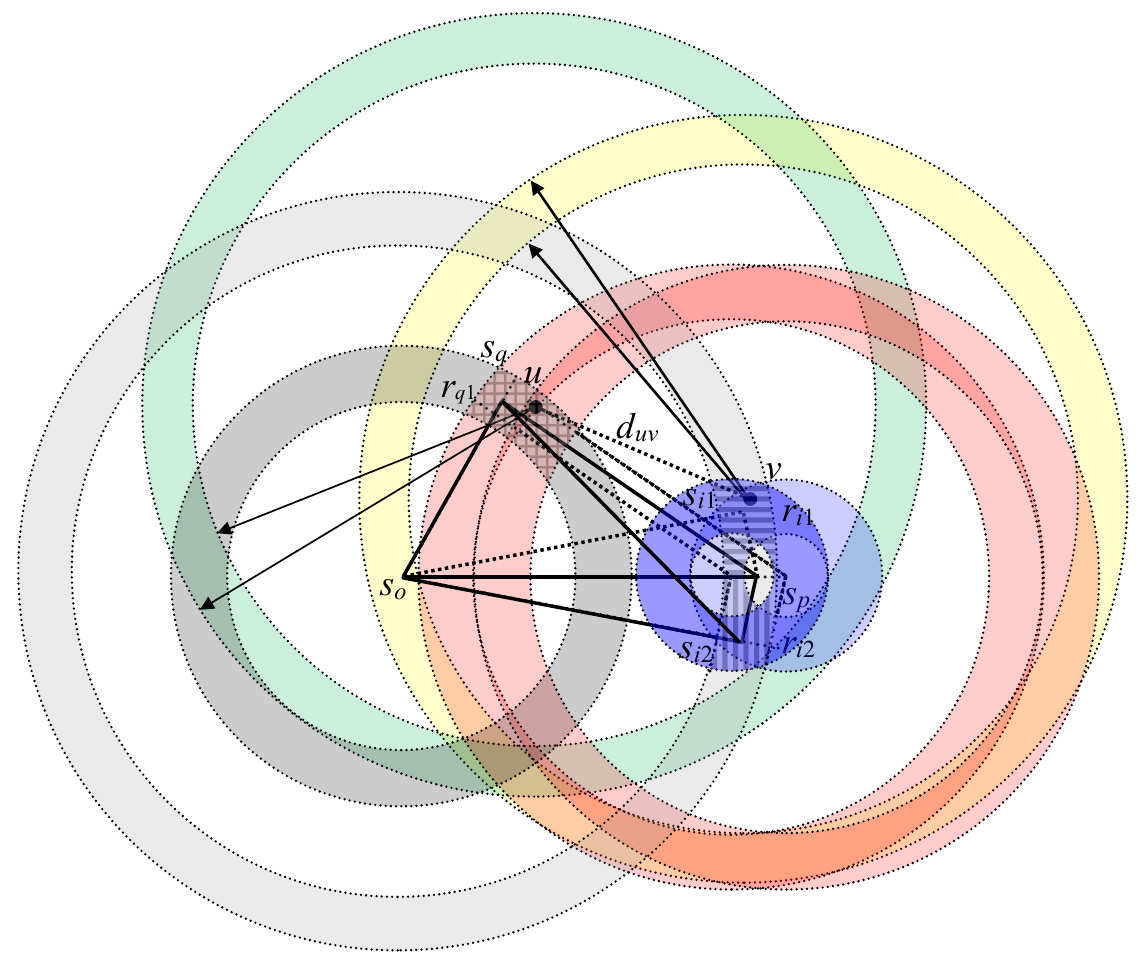}}
  \hspace{0in}
  \subfigure[Both near-collinear case and overlapping case]{
  \includegraphics[trim=0mm 0mm 0mm 0mm, width=1.66in]{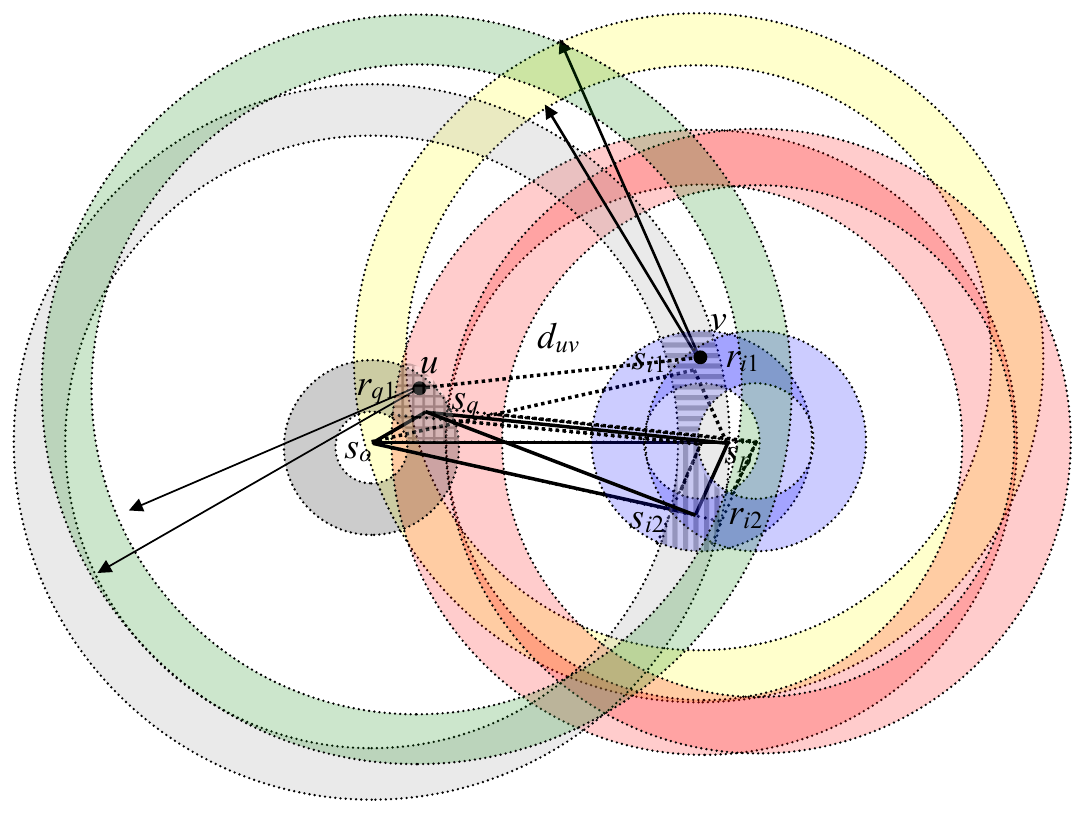}}
  {\caption{Localization without flip ambiguities using trilateration.}
  \label{fig:3}}
\end{minipage}
\end{figure*}

We now address the scenario where trilateration is used without flip ambiguities, and Fig. \ref{fig:3} shows the corresponding graph of this scenario. Just like Fig. \ref{fig:2}, a normal case is depicted without near-collinear and overlapping occurrences in Fig. \ref{fig:3} (a). Fig. \ref{fig:3} (b) and (c) show a near-collinear case and an overlapping case, respectively. And then, Fig. \ref{fig:3} (d) shows a compositive case with near-collinear and overlapping occurrences. Due to the feasibility of our method to all the cases, we will also follow the same process. The detection of the FA occurrences is now analyzed. The possible localization region of $s_q$ is the upper region $r_{q1}$ consisting of the intersections of the black annulus and the red annuli. The intersections of the gray annulus and the blue annuli form two regions $r_{i1}$ and $r_{i2}$, which are the possible localization regions of $s_i$. The three regions highlighted 
with horizontal stripes, vertical stripes and mesh indicate $r_{r1}$, $r_{r2}$ and $r_{q1}$, respectively. $u$ and $v$ are any points of $r_{q1}$ and $r_{i1}$, respectively. 
Since $d_{uv} \notin [\hat{d}_{qi} - \varepsilon, \hat{d}_{qi} + \varepsilon]$, $v$ is without the constraints of the distance $\hat{d}_{qi}$, and each point of $r_{i1}$ is clearly without the constraint of $\hat{d}_{qi}$. Therefore, for $u$, the whole region $r_{i1}$ is not a possible localization region of $s_i$.

Similar to the analysis for the scenario using bilateration, the two annuli $r_{oq}$ and $r_{pq}$ are also respectively defined as follows:
\begin{equation}
\label{eqn:22}
  r_{oq} = \{(x,y) | (\hat{d}_{oq} - \varepsilon)^2 \leq x^2 + y^2 \leq (\hat{d}_{oq} + \varepsilon)^2\},
\end{equation}
\begin{equation}
\label{eqn:23}
\begin{aligned}
  r_{pq} = \{&(x,y) | \\
  ((\hat{d}_{pq} & - \varepsilon)^2 \leq (x - \hat{d}_{op} + \varepsilon)^2 + y^2 \leq (\hat{d}_{pq} + \varepsilon)^2) \vee\\
                     ((\hat{d}_{pq} & - \varepsilon)^2 \leq (x - \hat{d}_{op} - \varepsilon)^2 + y^2 \leq (\hat{d}_{pq} + \varepsilon)^2)\}.
\end{aligned}
\end{equation}

\noindent
Thus, the possible localization region of node $s_q$ is also defined as follows:
\begin{equation}
\label{eqn:24}
  r_{q1} = \{(x,y) | (x,y) \in r_{oq} \cap r_{pq} \cap H\}.
\end{equation}

Similarly, using the two measured distances $\hat{d}_{oi}$ and $\hat{d}_{pi}$ that are affected by noises, the two annuli $r_{oi}$ and $r_{pi}$ are also respectively defined as follows:
\begin{equation}
\label{eqn:25}
  r_{oi} = \{(x,y) | (\hat{d}_{oi} - \varepsilon)^2 \leq x^2 + y^2 \leq (\hat{d}_{oi} + \varepsilon)^2\},
\end{equation}
\begin{equation}
\label{eqn:26}
\begin{aligned}
  r_{pi} = \{&(x,y) | \\
  ((\hat{d}_{pi} & - \varepsilon)^2 \leq (x - \hat{d}_{op} + \varepsilon)^2 + y^2 \leq (\hat{d}_{pi} + \varepsilon)^2) \vee \\
  ((\hat{d}_{pi} & - \varepsilon)^2 \leq (x - \hat{d}_{op} - \varepsilon)^2 + y^2 \leq (\hat{d}_{pi} + \varepsilon)^2)\}.\hspace*{-2ex}
\end{aligned}
\end{equation}

\noindent
Following which, the two possible localization regions of $s_i$ are also defined as follows:
\begin{equation}
\label{eqn:27}
  r_{i1} = \{(x,y) | (x,y) \in r_{oi} \cap r_{pi} \cap H\},
\end{equation}
\begin{equation}
\label{eqn:28}
  r_{i2} = \{(x,y) | (x,y) \in r_{oi} \cap r_{pi} \cap H'\}.
\end{equation}

\noindent
The true location $p_i$ must be located in one of the two separated regions $r_{i1}$ and $r_{i2}$, but incorrect calculation can cause FA; thus, to address the problem of FA and derive the estimated location, we use a third annulus $r_{qi}$, which is defined by the following equation:
\begin{equation}
\label{eqn:29}
  r_{qi} = \{(x,y) | (\hat{d}_{qi} - \varepsilon)^2 \leq (x-x_q)^2 + (y-y_q)^2 \leq (\hat{d}_{qi} + \varepsilon)^2\}.
\end{equation}

To summarize, we have the following theorem:

\begin{theorem}
\label{thm:02}
Given a localized triangle $\triangle opq$, where its three localized nodes $s_o$, $s_p$ and $s_q$ are adjacent to each other, and an unknown node $s_i$, such that the edges $e(o,i)$, $e(p,i)$ and $e(q,i)$ exist. \ws{Let $\varepsilon = e_{max} > 0$ be a threshold of the distance measurement errors.} $s_i$ can be a unique localization, and its sufficient conditions are as follows:
\begin{enumerate}[\indent(1)]
\item $\forall x\in r_{q1}$, $\forall y\in r_{i1}$, $d_{xy} \notin [\hat{d}_{qi} - \varepsilon, \hat{d}_{qi} + \varepsilon]$,
\item $\exists u\in r_{q1}$, $\exists v\in r_{i2}$, $d_{uv} \in [\hat{d}_{qi} - \varepsilon, \hat{d}_{qi} + \varepsilon]$.
\end{enumerate}
Or,
\begin{enumerate}[\indent(1)]
\item $\forall x\in r_{q1}$, $\forall y\in r_{i2}$, $d_{xy} \notin [\hat{d}_{qi} - \varepsilon, \hat{d}_{qi} + \varepsilon]$,
\item $\exists u\in r_{q1}$, $\exists v\in r_{i1}$, $d_{uv} \in [\hat{d}_{qi} - \varepsilon, \hat{d}_{qi} + \varepsilon]$.
\end{enumerate}
\end{theorem}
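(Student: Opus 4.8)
The plan is to mirror the elimination argument used for Theorem \ref{thm:01}, with the communication-range constraint $D_{qi}$ replaced by the third distance constraint $r_{qi}$ of equation (\ref{eqn:29}). The guiding idea is that the true location $p_i$ of $s_i$ must lie in exactly one of the two candidate regions $r_{i1}$, $r_{i2}$, and the measured distance $\hat{d}_{qi}$ — which is available precisely because the edge $e(q,i)$ exists in the trilateration setting — lets us rule out the wrong one. I would treat the first set of conditions in full and dispose of the second by the obvious symmetry obtained from swapping the roles of $r_{i1}$ and $r_{i2}$.

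First I would establish that the true positions are correctly bracketed by the constructed regions. Since $e(o,i)$ and $e(p,i)$ exist, the bound (\ref{eqn:01}) gives $d_{oi} \in [\hat{d}_{oi} - \varepsilon, \hat{d}_{oi} + \varepsilon]$ and $d_{pi} \in [\hat{d}_{pi} - \varepsilon, \hat{d}_{pi} + \varepsilon]$; together with $s_o$ placed at the origin and $p_p$ lying on the segment between $(\hat{d}_{op} - \varepsilon, 0)$ and $(\hat{d}_{op} + \varepsilon, 0)$, this forces $p_i \in r_{oi} \cap r_{pi}$, hence $p_i \in r_{i1} \cup r_{i2}$ by the half-plane partition (\ref{eqn:27})--(\ref{eqn:28}). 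The identical reasoning applied to $s_q$ yields $p_q \in r_{q1}$. Finally, because $e(q,i)$ exists, (\ref{eqn:01}) gives the crucial fact $d_{qi} \in [\hat{d}_{qi} - \varepsilon, \hat{d}_{qi} + \varepsilon]$; that is, the true pair $(p_q, p_i)$ realizes a distance inside the admissible interval.

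With these facts in hand the elimination is immediate. Suppose, for contradiction, that $p_i \in r_{i1}$. Taking $x = p_q \in r_{q1}$ and $y = p_i \in r_{i1}$ in condition (1) shows that the distance between $p_q$ and $p_i$ — namely $d_{qi}$ — satisfies $d_{qi} \notin [\hat{d}_{qi} - \varepsilon, \hat{d}_{qi} + \varepsilon]$, contradicting the bracketing established above. Hence $p_i \notin r_{i1}$, so $p_i \in r_{i2}$, and $s_i$ is uniquely localized in $r_{i2}$ with no possibility of FA. Condition (2) plays the complementary role of certifying that $r_{i2}$ is genuinely consistent with the measurement: it exhibits a witness pair $u \in r_{q1}$, $v \in r_{i2}$ whose separation lies in the admissible interval, so that the surviving region is not selected vacuously and the criterion remains a well-defined decision rule rather than an inference that could misfire in a degenerate configuration.

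I expect the only delicate point to be the bracketing step, specifically verifying that the true location actually lies in the regions as written in (\ref{eqn:16}) and (\ref{eqn:26}), since those are expressed as the union of only the two extremal annuli rather than the full family swept as $p_p$ ranges over the uncertainty segment on the $x$-axis. I would handle this by arguing at the level of the constraints the elimination actually consumes — namely the memberships $p_q \in r_{q1}$ and $p_i \in r_{i1} \cup r_{i2}$ together with the interval containment of $d_{qi}$ — rather than leaning on the fine geometry of the two-annulus approximation. Once the containment $d_{qi} \in [\hat{d}_{qi} - \varepsilon, \hat{d}_{qi} + \varepsilon]$ is secured, the remainder is a direct logical consequence, and the symmetric second branch follows verbatim with $r_{i1}$ and $r_{i2}$ interchanged.
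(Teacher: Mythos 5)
Your proof is correct and takes essentially the same route as the paper: a region-elimination argument in which every point pair between $r_{q1}$ and $r_{i1}$ violating the third distance constraint $\hat{d}_{qi}$, while the true pair $(p_q, p_i)$ must satisfy it, forces $p_i \in r_{i2}$ and hence a unique, FA-free localization. Your write-up is in fact more explicit than the paper's (which states the elimination informally without the bracketing facts or the contradiction), and the subtlety you flag about the two-extremal-annulus definition of $r_{pi}$ in (\ref{eqn:26}) is a genuine gap that the paper's own proof also passes over silently.
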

\begin{proof}
Following the same analysis in Theorem \ref{thm:01}, $r_{i1}$ and $r_{i2}$ are two possible localization regions of $s_i$ based on constraints of the two distances $\hat{d}_{oi}$ and $\hat{d}_{pi}$. There are at least one pair of points, whose distances are under the constraints of the third distance $\hat{d}_{qi}$, between $r_{q1}$ and $r_{i2}$ (or $r_{i1}$), whereas there is no any pair of points, whose distance is without the constraints of the third distance $\hat{d}_{qi}$, between $r_{q1}$ and $r_{i1}$ (or $r_{i2}$). Thus, the whole region $r_{i1}$ (or $r_{i2}$) is eliminated from the possible localization regions of $s_i$, and the other region $r_{i2}$ (or $r_{i1}$) is a unique estimated region that causes no FA.
\end{proof}

\begin{figure*}
\begin{minipage}{1\textwidth}
  \centering
  \subfigure[A random network graph]{
  \includegraphics[trim=0mm 0mm 0mm 0mm, width=1.66in]{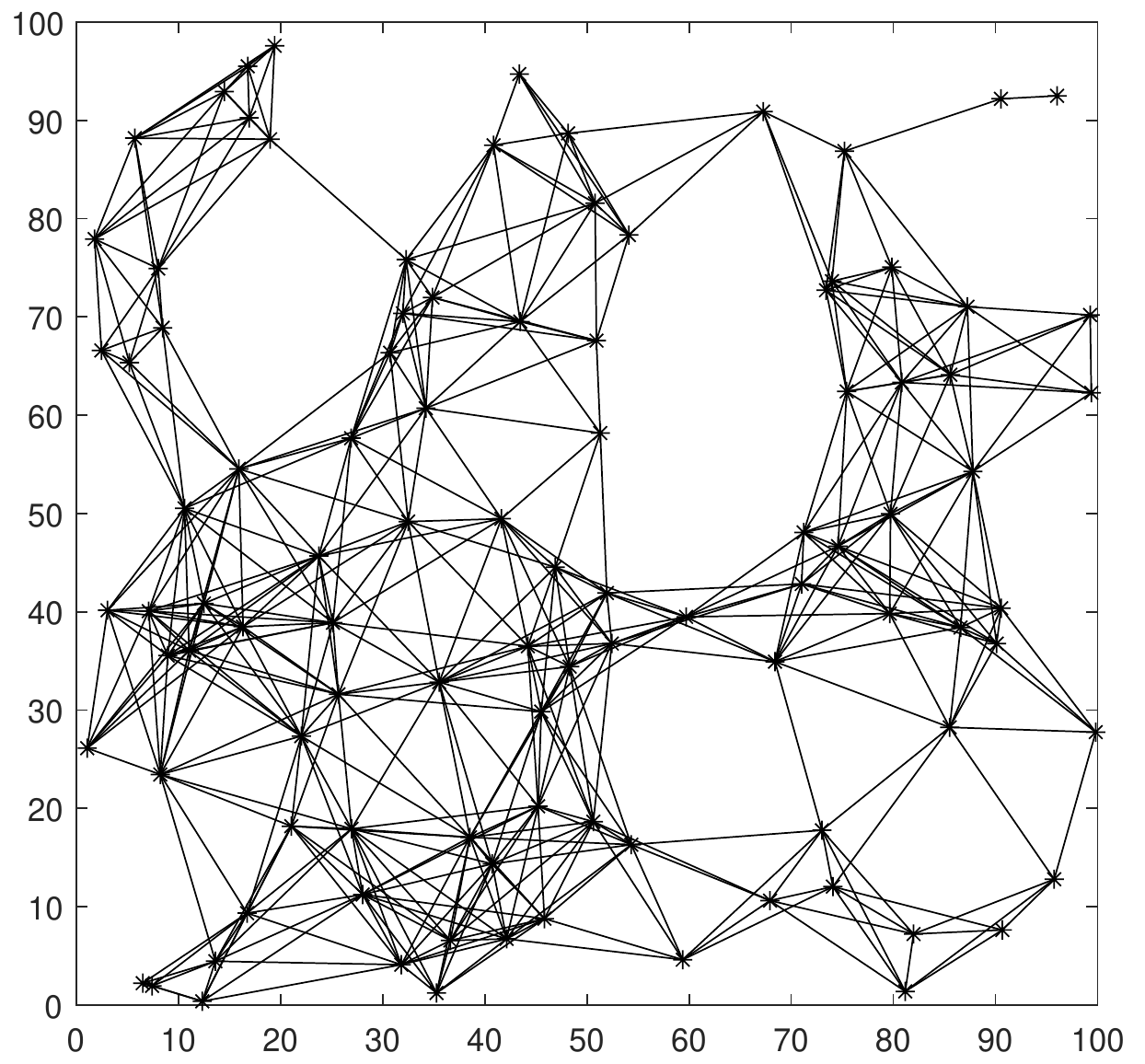}}
  \hspace{0in}
  \subfigure[TLA]{
  \includegraphics[trim=0mm 0mm 0mm 0mm, width=1.66in]{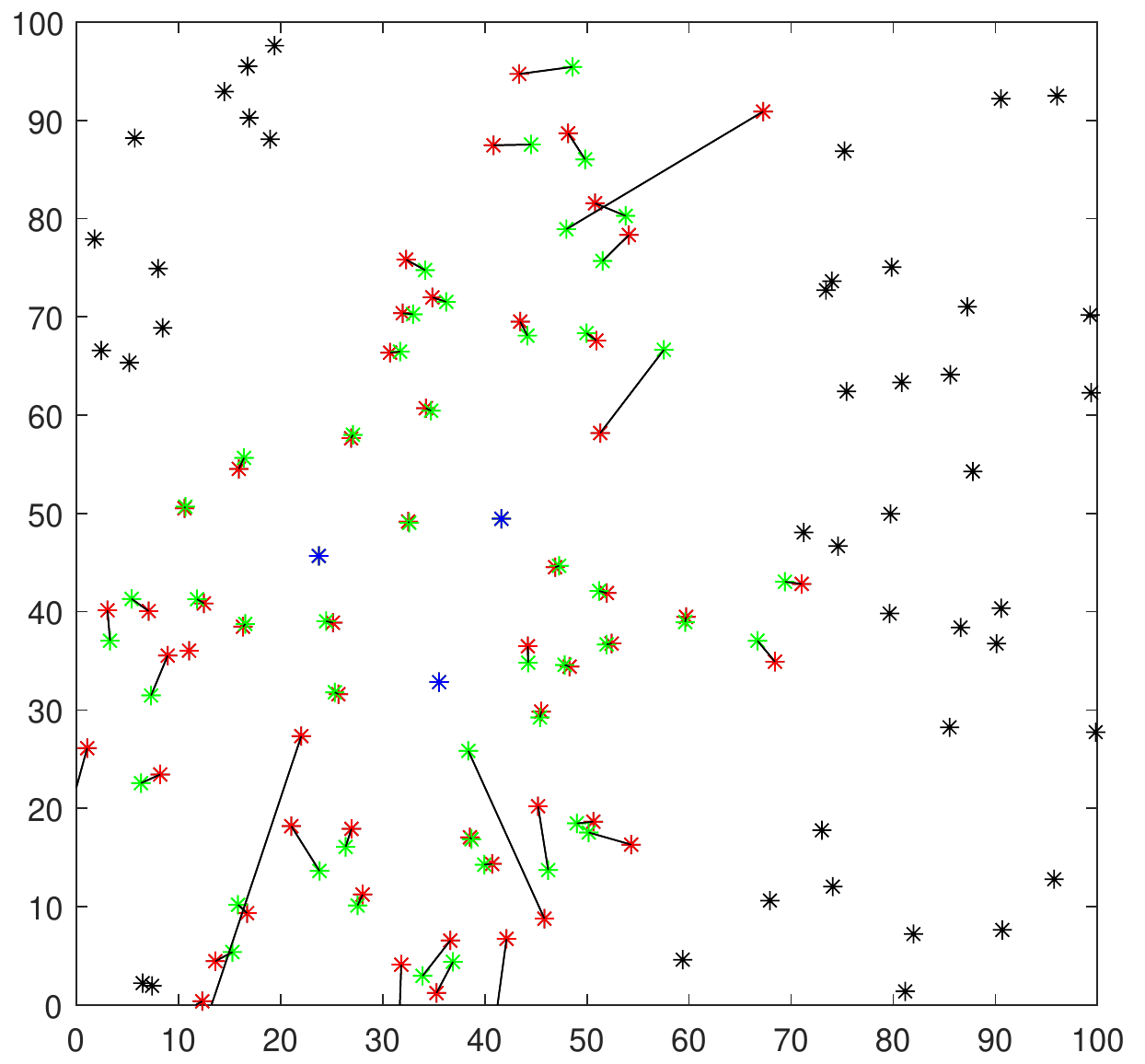}}
  \hspace{0in}
  \subfigure[SELA]{
  \includegraphics[trim=0mm 0mm 0mm 0mm, width=1.66in]{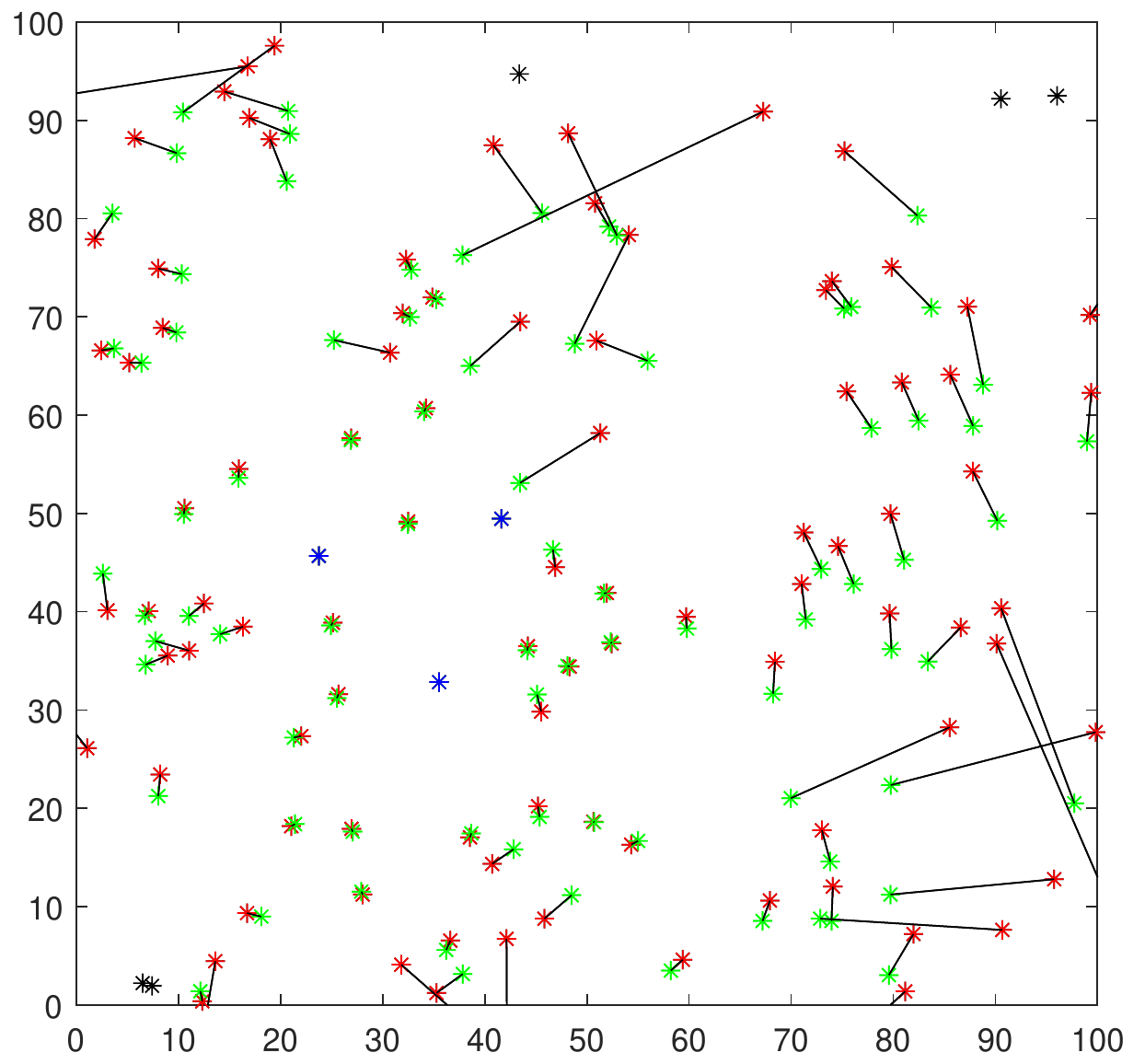}}
  \hspace{0in}
  \subfigure[AFALA]{
  \includegraphics[trim=0mm 0mm 0mm 0mm, width=1.66in]{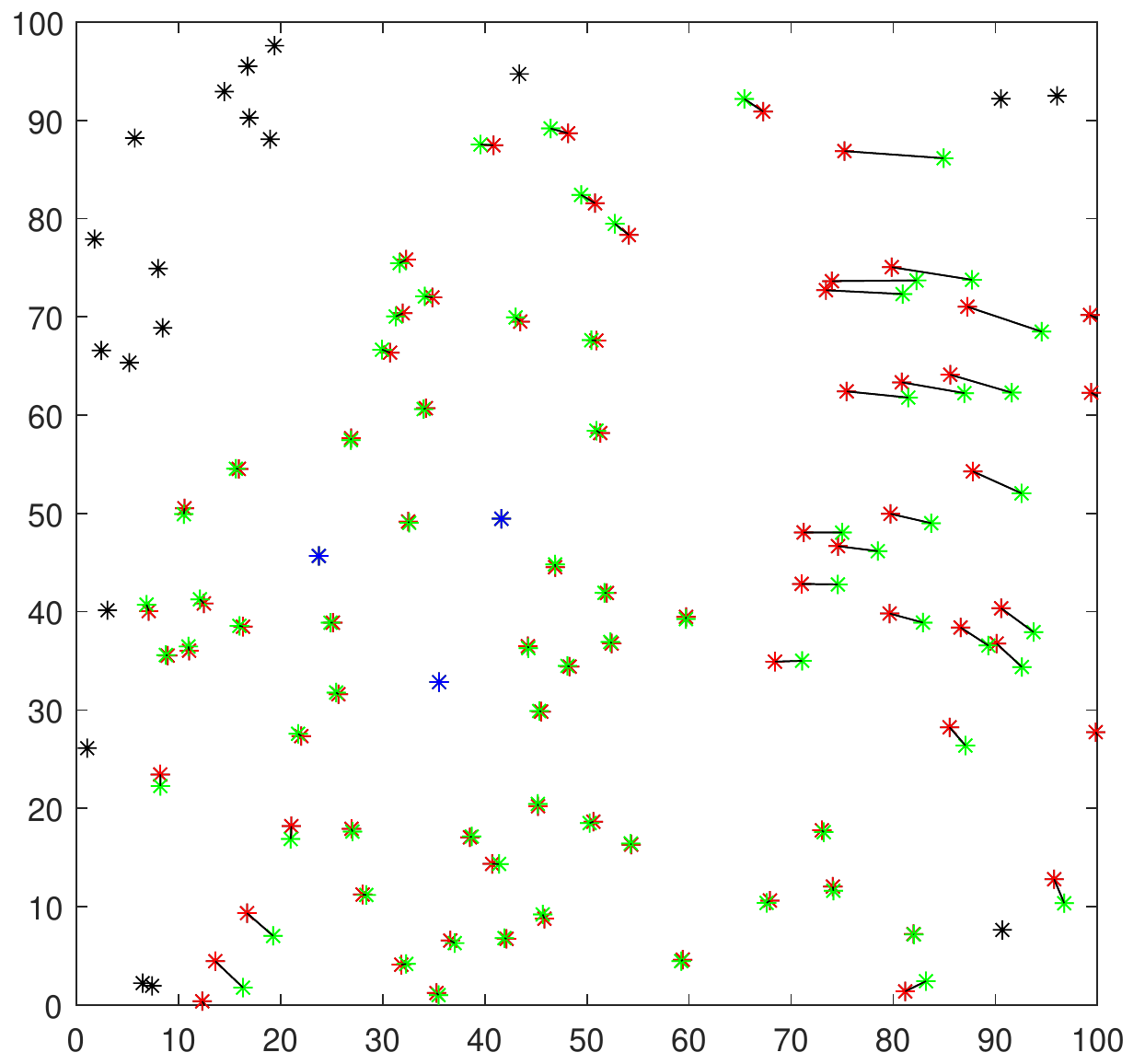}}
  {\caption{Comparison of TLA, SELA and AFALA on a random graph with $r$=20, $\delta$=0.02 and $\varepsilon$=$\delta$, where green and red tic marks denote the true and estimated locations, respectively.}
  \label{fig:4}}
\end{minipage}
\end{figure*}

\begin{figure*}
  \centering
  \subfigure[\%localized nodes]{
  \includegraphics[trim=5mm 0mm -5mm 0mm, width=2.02in]{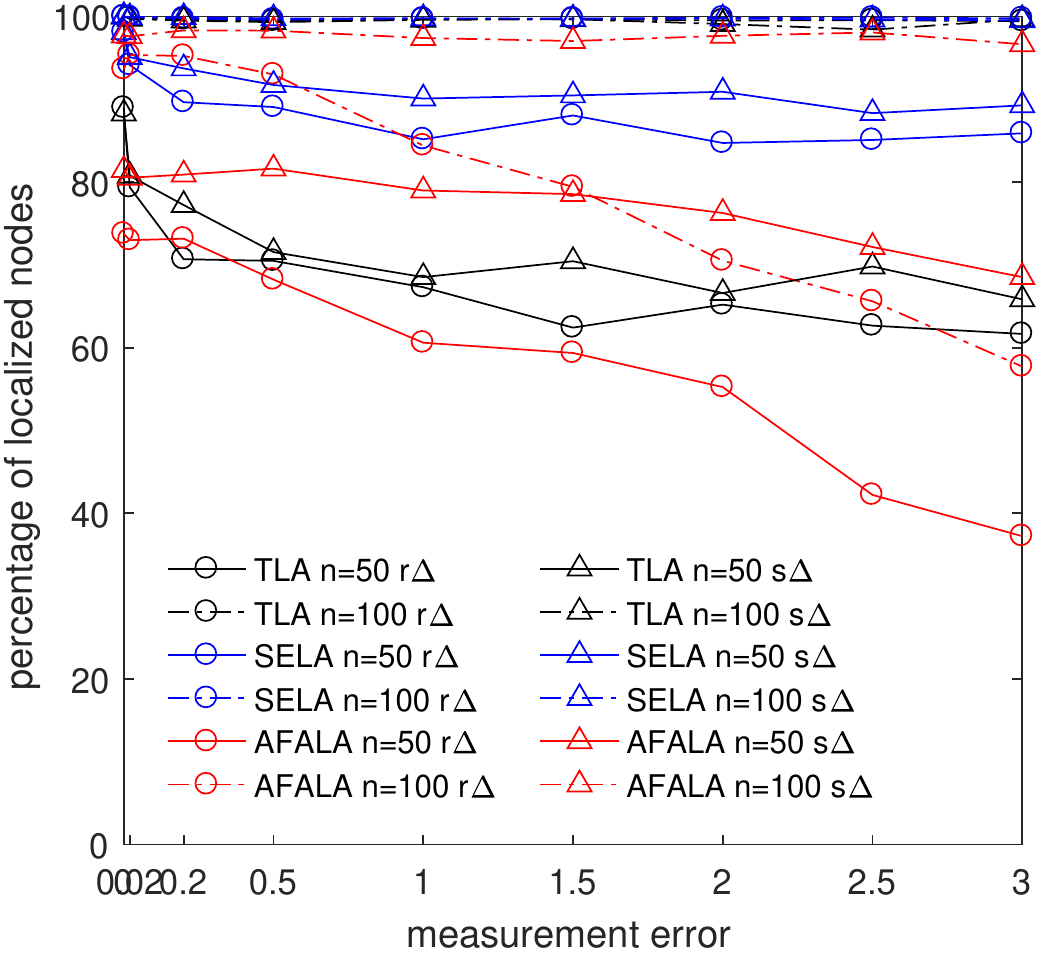}}
  \hspace{0in}
  \subfigure[Average estimation error]{
  \includegraphics[trim=5mm 0mm -5mm 0mm, width=2.02in]{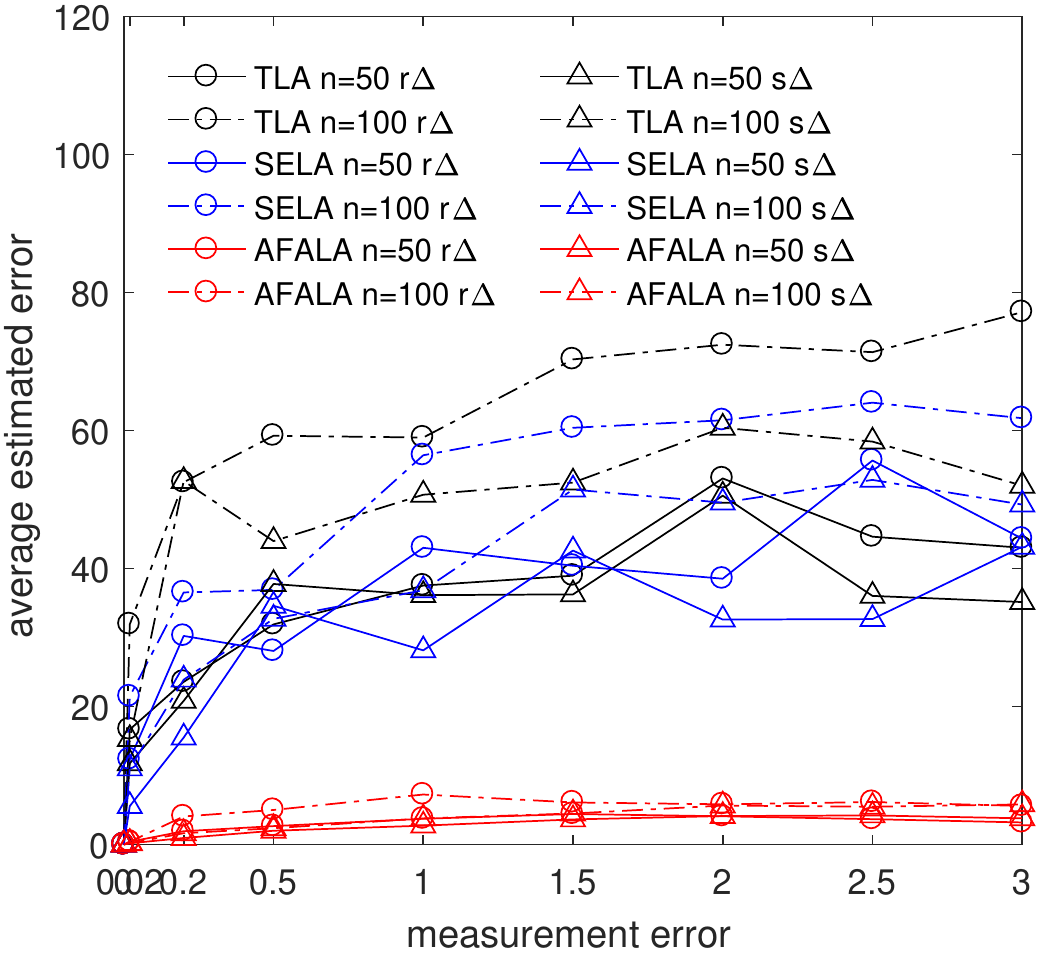}}
  \caption{Impact of initial triangle. $r$ = 30, $d_{max}$ = 0. r$\triangle$ and s$\triangle$ denote the random and special triangle, respectively.}
  \label{fig:5}
\end{figure*}

\begin{figure*}
  \centering
  \subfigure[\%localized nodes]{
  \includegraphics[trim=5mm 0mm -5mm 0mm, width=2.02in]{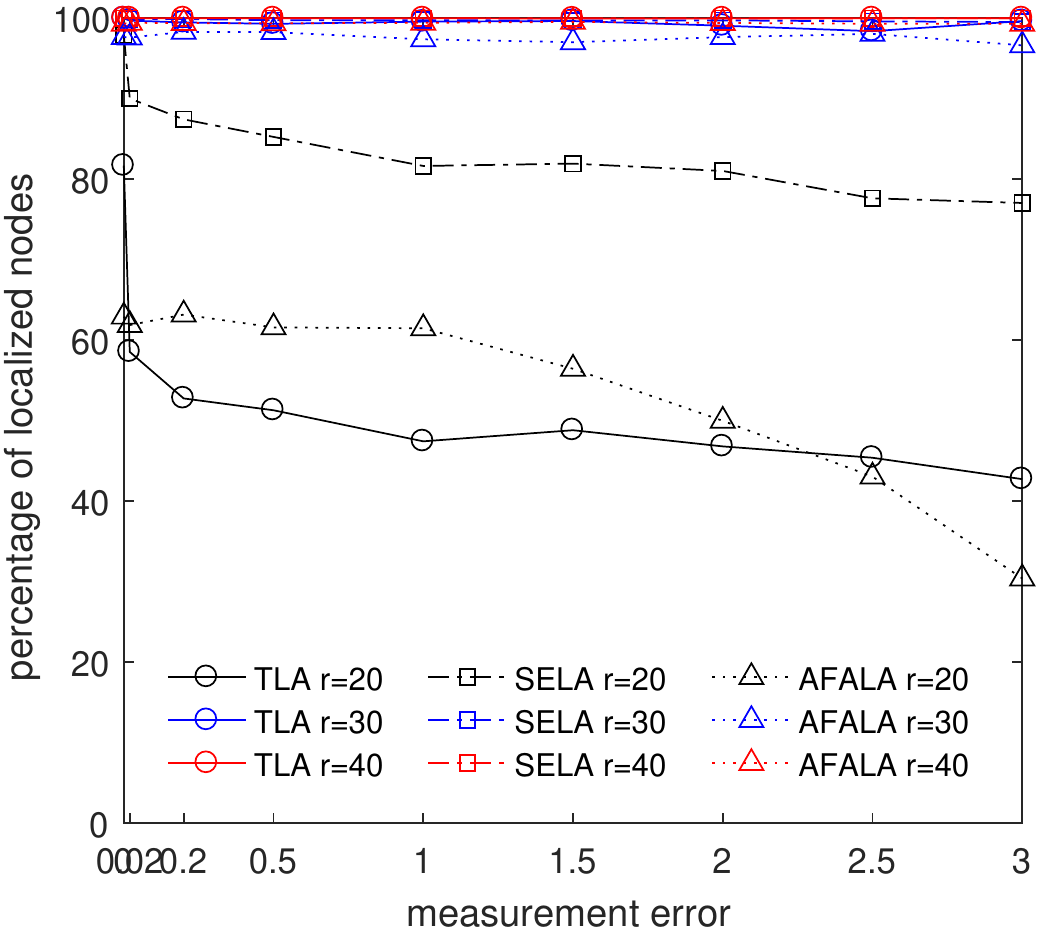}}
  \hspace{0in}
  \subfigure[Average estimation error]{
  \includegraphics[trim=5mm 0mm -5mm 0mm, width=2.02in]{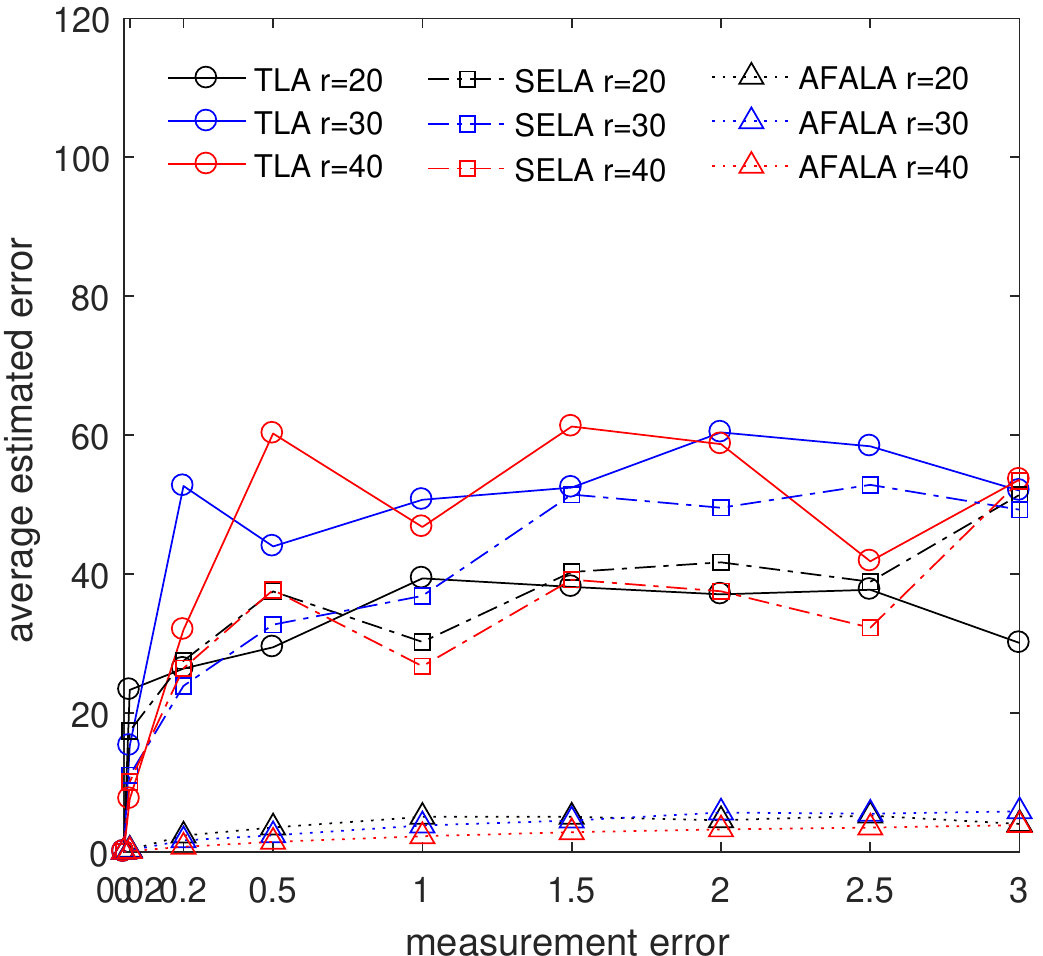}}
  \caption{Impact of measurement error. $n$ = 100, s$\triangle$, $d_{max}$ = 0.}
  \label{fig:6}
\end{figure*}

\begin{figure*}
  \centering
  \subfigure[\%localized nodes]{
  \includegraphics[trim=5mm 0mm -5mm 0mm, width=2.02in]{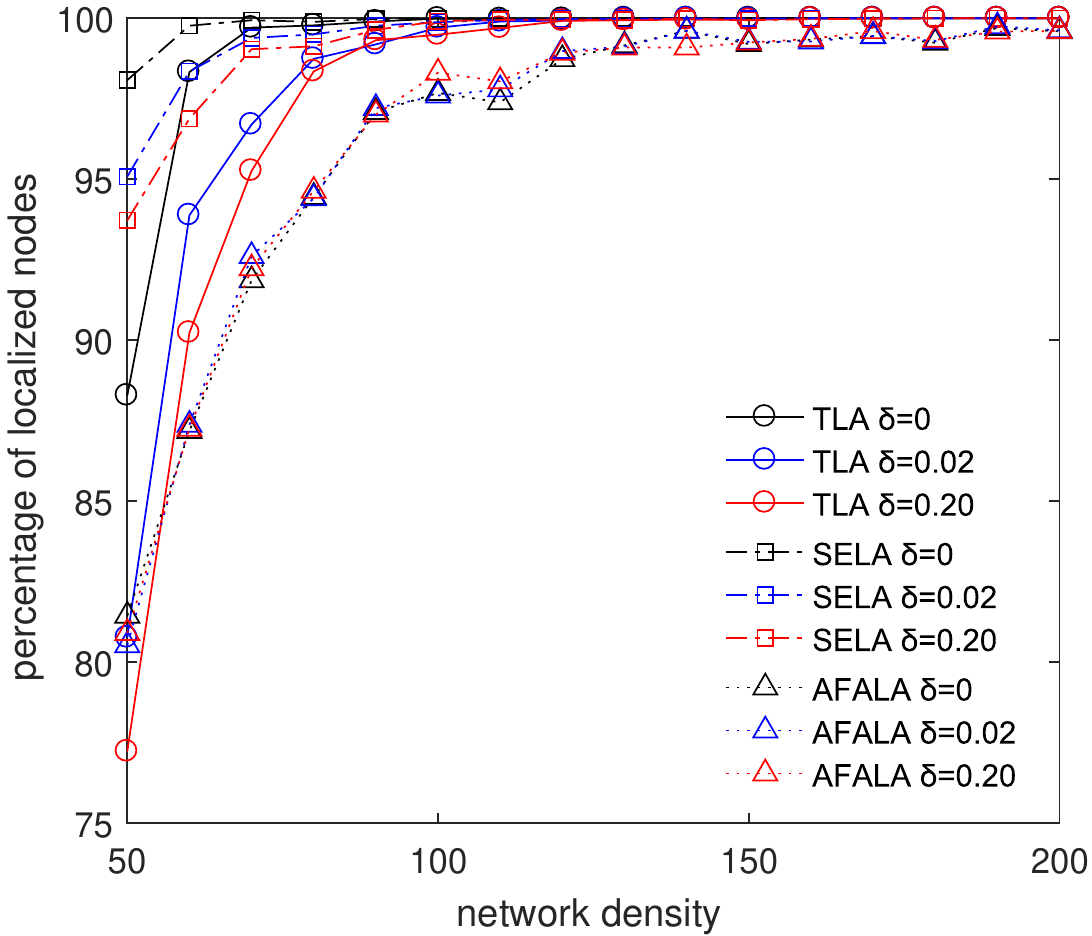}}
  \hspace{0in}
  \subfigure[Average estimation error]{
  \includegraphics[trim=5mm 0mm -5mm 0mm, width=2.02in]{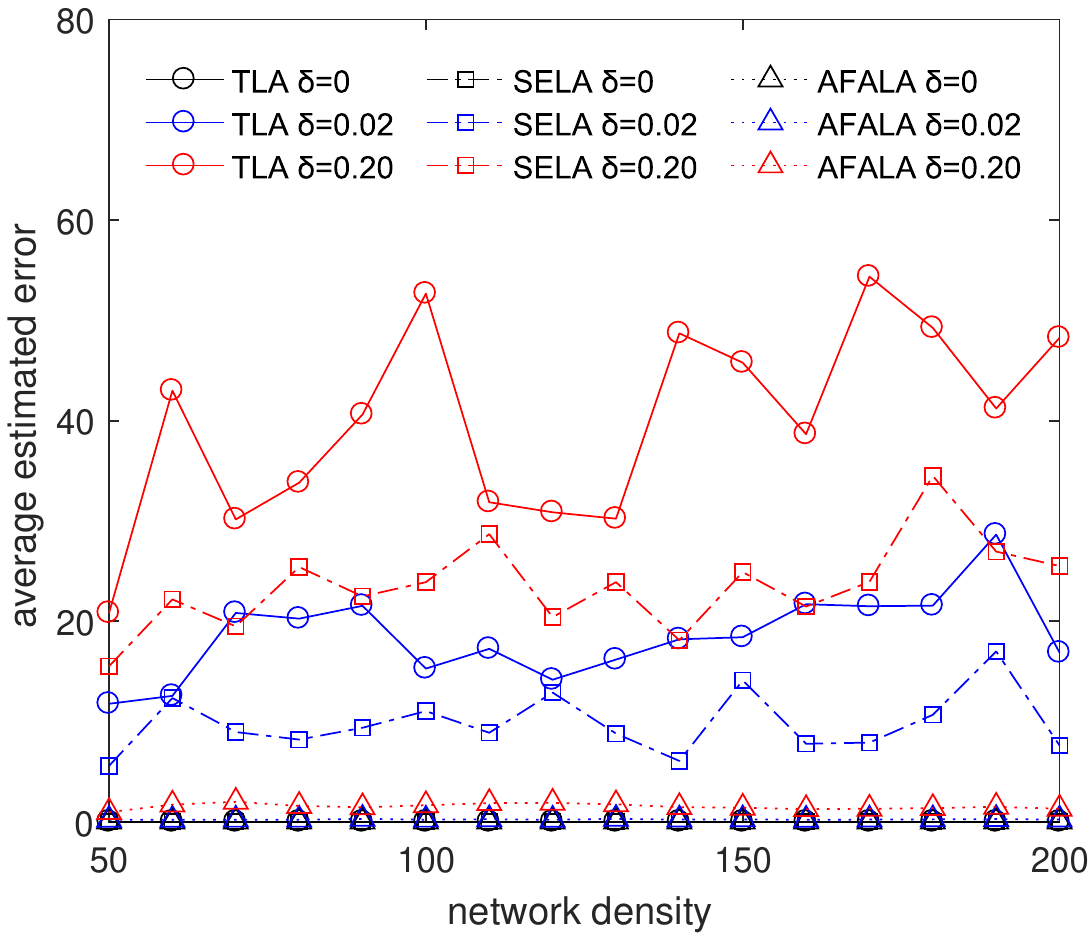}}
  \caption{Impact of network density. s$\triangle$, $r$ = 30, $d_{max}$ = 0.}
  \label{fig:7}
\end{figure*}

\begin{figure*}
  \centering
  \subfigure[\%localized nodes]{
  \includegraphics[trim=5mm 0mm -5mm 0mm, width=2.02in]{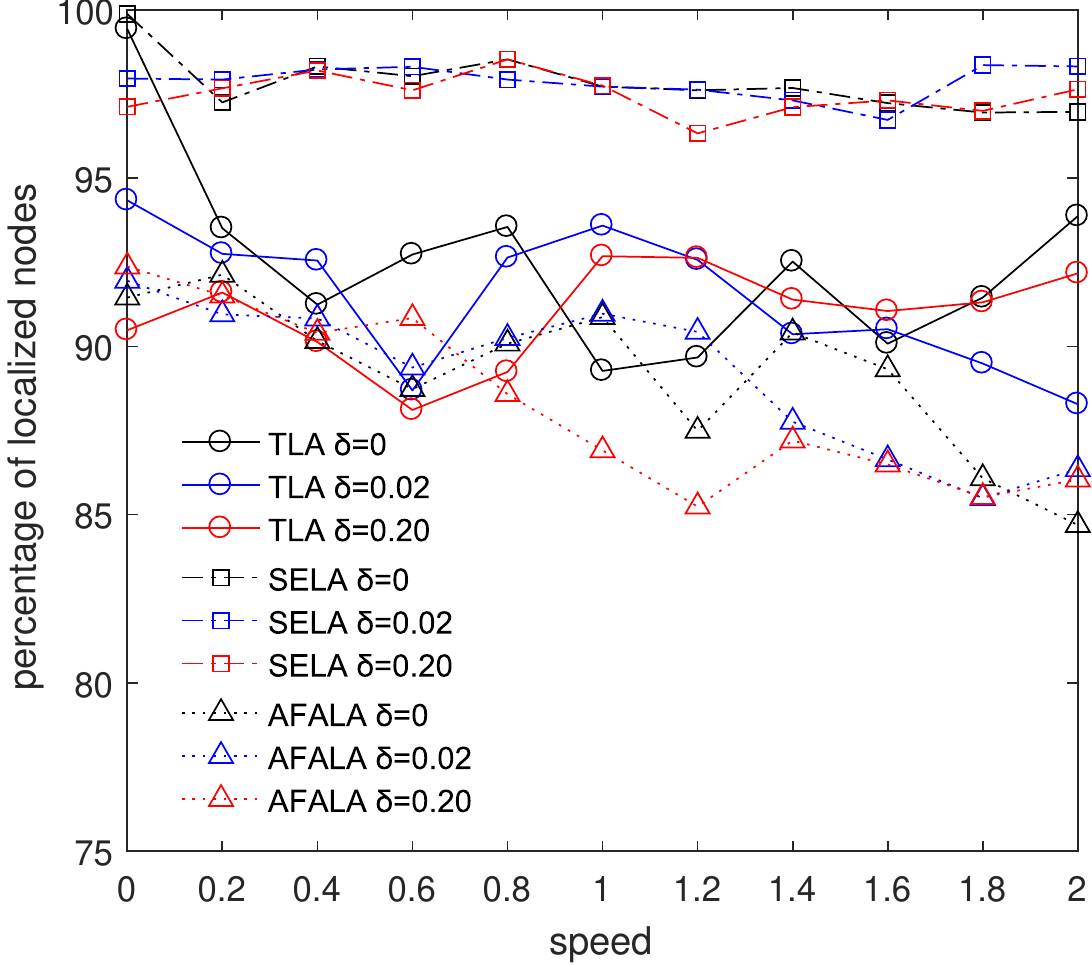}}
  \hspace{0in}
  \subfigure[Average estimation error]{
  \includegraphics[trim=5mm 0mm -5mm 0mm, width=2.02in]{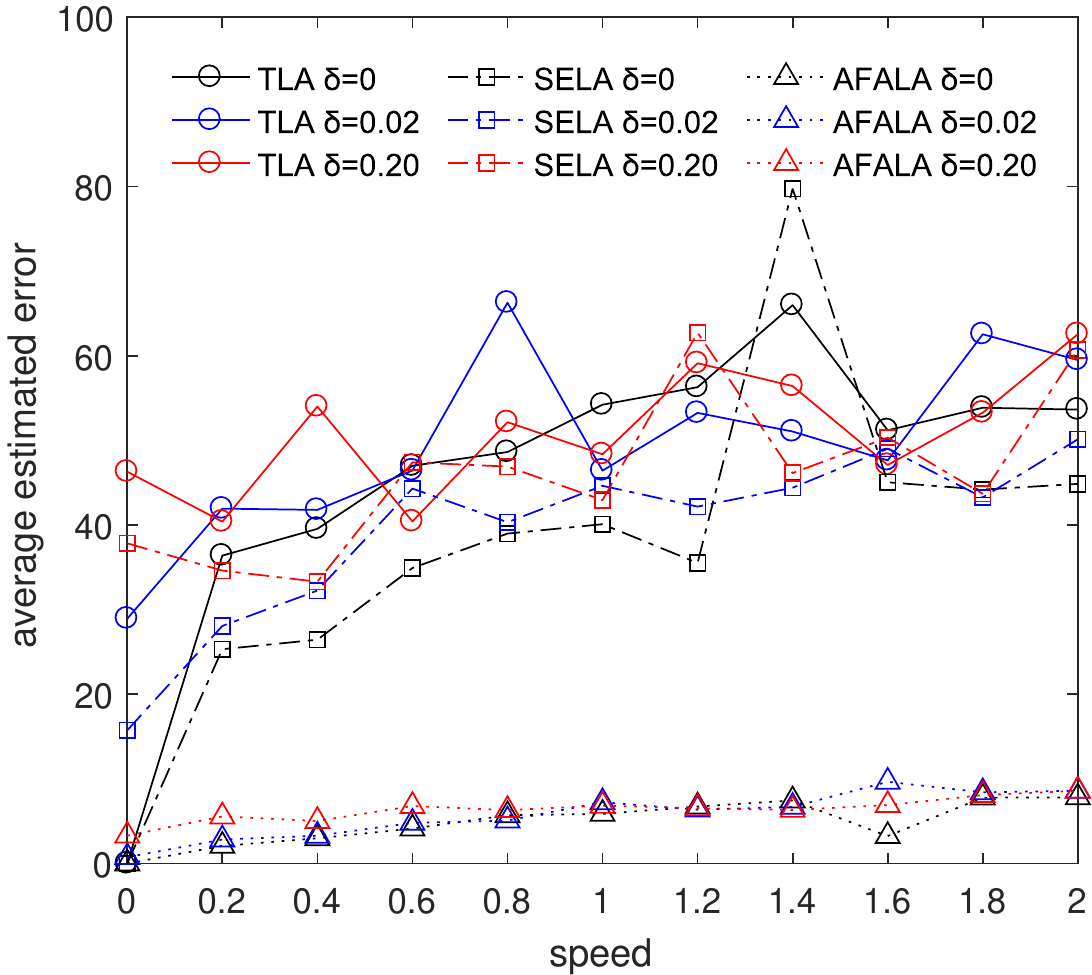}}
  \caption{Impact of node motion. $n$ = 150, s$\triangle$, $r$ = 20.}
  \label{fig:8}
\end{figure*}

\ws{\subsection{Localization Considering Velocities and Directions}}
\ws{Considering the motion velocities and directions of micro-UAVs, we further introduce our localization criteria avoiding FA in trilateration and bilateration. For a connected network, we first assume that the location update can be completed in one time unit. Since the velocities and directions vary, they are represented as the motion distances per time unit \cite{Hu04}. The size and sign of the motion distances are used to denote the magnitude and direction of motions. Here, positive value indicates that a node moves in the opposite direction, leading to an increase of distances between nodes, while negative value indicates the motion towards each other. When the distances between neighboring nodes are acquired together in a time unit, the possible locations of each node fall into a larger region due to their motions.}

\ws{Combining the factor of measurement errors mentioned above, the distance $d_{ij}$ between any two neighboring nodes $s_i$ and $s_j$ can be re-constrained as follows:
\begin{equation}
\label{eqn:30}
  d_{ij} \in [\hat{d}_{ij} - (e_{max} + d_{max}), \hat{d}_{ij} + (e_{max} + d_{max})].
\end{equation}}

\noindent
\ws{These two parameters, measurement error and motion, are regarded as the influential factors on distance-based localizations. Therefore, the dynamic localization problem can be formulated with $\varepsilon = e_{max} + d_{max}$. According to theorem \ref{thm:01} and \ref{thm:02}, two lemmas in trilateration and bilateration can be induced.}

\ws{\begin{lemma}
\label{lem:01}
Given a localized triangle $\triangle opq$, where its three localized nodes $s_o$, $s_p$ and $s_q$ are adjacent to each other, and an unknown node $s_i$, such that the edges $e(o,i)$ and $e(p,i)$ exist, but $e(q,i)$ does not exist. Let $\varepsilon = e_{max} + d_{max} > 0$ be a threshold of the distance variations. $s_i$ can be a unique localization, and its sufficient conditions are as follows:
\begin{enumerate}[\indent(1)]
\item $\forall x\in r_{q1}$, $\forall y\in r_{i1}$, $d_{xy} < D_{qi}$,
\item $\exists u\in r_{q1}$, $\exists v\in r_{i2}$, $d_{uv} > D_{qi}$.
\end{enumerate}
\end{lemma}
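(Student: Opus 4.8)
The plan is to prove Lemma \ref{lem:01} by a direct reduction to Theorem \ref{thm:01}. The essential observation is that the proof of Theorem \ref{thm:01} never uses the fact that $\varepsilon$ equals $e_{max}$; it relies only on the containment $d_{ij} \in [\hat{d}_{ij} - \varepsilon, \hat{d}_{ij} + \varepsilon]$ from equation (\ref{eqn:01}), together with the region definitions built on top of it. Equation (\ref{eqn:30}) shows that once the flying motions are incorporated, the true distance between any two neighboring nodes again lies in an interval of exactly the same form, namely $[\hat{d}_{ij} - (e_{max} + d_{max}), \hat{d}_{ij} + (e_{max} + d_{max})]$. Hence setting $\varepsilon = e_{max} + d_{max}$ recovers precisely the hypothesis (\ref{eqn:01}) under which Theorem \ref{thm:01} was established, so that the dynamic problem becomes formally identical to the static one already handled.

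First I would verify that every object appearing in Theorem \ref{thm:01} is defined purely through $\varepsilon$ and the measured distances. The annuli $r_{oq}$, $r_{pq}$, $r_{oi}$, $r_{pi}$ (equations (\ref{eqn:08})--(\ref{eqn:11}), (\ref{eqn:15})--(\ref{eqn:16})), the half-planes $H$ and $H'$, the candidate regions $r_{q1}$, $r_{i1}$, $r_{i2}$ (equations (\ref{eqn:14}), (\ref{eqn:17}), (\ref{eqn:18})), and the guaranteed communication distance $D_{qi}$ (equation (\ref{eqn:21})) all depend on $\varepsilon$ only. Substituting $\varepsilon = e_{max} + d_{max}$ therefore leaves all of these definitions structurally unchanged. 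The logical core of Theorem \ref{thm:01} then transfers verbatim: condition (1) forces every point of $r_{i1}$ to lie within distance $D_{qi}$ of every point of $r_{q1}$, so if $s_i$ occupied $r_{i1}$ it would necessarily be adjacent to $s_q$, contradicting the non-existence of the edge $e(q,i)$; condition (2) certifies that $r_{i2}$ remains consistent with $s_q$ being non-adjacent. Consequently $r_{i2}$ is the unique feasible localization region, and no flip ambiguity can arise.

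The step I expect to require the most care is confirming that the bi-boundary communication-range bound of equation (\ref{eqn:02}), $\hat{r}_i \geq max(D_i^1) - \varepsilon$, and hence the adjacency certificate $D_{qi}$ of equation (\ref{eqn:21}), remain valid after $\varepsilon$ is enlarged to $e_{max} + d_{max}$. The concern is that these bounds were originally derived for the measurement-only threshold. I would argue this via monotonicity: enlarging $\varepsilon$ only widens the distance interval (\ref{eqn:30}) and therefore can only decrease the quantity $max(D_i^1) - \varepsilon$; since the farthest one-hop neighbor is in fact connected, the true communication range still dominates this now-smaller lower estimate, so the inequality persists. This makes $D_{qi}$ a more conservative but still valid threshold below which mutual adjacency is guaranteed, which is exactly the property the proof of Theorem \ref{thm:01} requires. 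Once this monotonicity is checked, the remainder of the argument is a verbatim restatement of Theorem \ref{thm:01}, completing the proof.
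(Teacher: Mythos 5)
Your proposal is correct and follows essentially the same route as the paper: the paper gives no separate proof of Lemma~\ref{lem:01}, but simply observes that equation~(\ref{eqn:30}) reformulates the dynamic problem with $\varepsilon = e_{max} + d_{max}$ and then ``induces'' the lemma directly from Theorem~\ref{thm:01}, which is exactly the reduction you spell out. Your additional monotonicity check that the bound $\hat{r}_i \geq max(D_i^1) - \varepsilon$ and hence $D_{qi}$ remain valid under the enlarged threshold is a detail the paper leaves implicit, and it is handled correctly.
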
}

\ws{\begin{lemma}
\label{lem:02}
Given a localized triangle $\triangle opq$, where its three localized nodes $s_o$, $s_p$ and $s_q$ are adjacent to each other, and an unknown node $s_i$, such that the edges $e(o,i)$, $e(p,i)$ and $e(q,i)$ exist. Let $\varepsilon = e_{max} + d_{max} > 0$ be a threshold of the distance variations. $s_i$ can be a unique localization, and its sufficient conditions are as follows:
\begin{enumerate}[\indent(1)]
\item $\forall x\in r_{q1}$, $\forall y\in r_{i1}$, $d_{xy} \notin [\hat{d}_{qi} - \varepsilon, \hat{d}_{qi} + \varepsilon]$,
\item $\exists u\in r_{q1}$, $\exists v\in r_{i2}$, $d_{uv} \in [\hat{d}_{qi} - \varepsilon, \hat{d}_{qi} + \varepsilon]$.
\end{enumerate}
Or,
\begin{enumerate}[\indent(1)]
\item $\forall x\in r_{q1}$, $\forall y\in r_{i2}$, $d_{xy} \notin [\hat{d}_{qi} - \varepsilon, \hat{d}_{qi} + \varepsilon]$,
\item $\exists u\in r_{q1}$, $\exists v\in r_{i1}$, $d_{uv} \in [\hat{d}_{qi} - \varepsilon, \hat{d}_{qi} + \varepsilon]$.
\end{enumerate}
\end{lemma}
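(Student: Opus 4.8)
The plan is to prove Lemma~\ref{lem:02} by reducing it to Theorem~\ref{thm:02}, observing that the two statements are structurally identical and differ only in the meaning assigned to the parameter $\varepsilon$. First I would recall that in Theorem~\ref{thm:02} the quantity $\varepsilon = e_{max}$ is precisely the half-width of the interval within which the true distance $d_{ij}$ may fall, as expressed in equation~(\ref{eqn:01}). The key step is then to invoke equation~(\ref{eqn:30}): once the constrained flying motion is accounted for, the true distance between any two neighboring nodes lies in $[\hat{d}_{ij} - (e_{max} + d_{max}), \hat{d}_{ij} + (e_{max} + d_{max})]$. Hence measurement error and motion together again confine $d_{ij}$ symmetrically about $\hat{d}_{ij}$, this time within an interval of half-width $\varepsilon = e_{max} + d_{max}$.

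Next I would verify that every geometric object used in the proof of Theorem~\ref{thm:02} enters only through this half-width $\varepsilon$. The annuli $r_{oq}$, $r_{pq}$, $r_{oi}$, $r_{pi}$ and $r_{qi}$ of equations~(\ref{eqn:22})--(\ref{eqn:29}) are all written purely in terms of the measured distances and $\varepsilon$, and the possible localization regions $r_{q1}$, $r_{i1}$ and $r_{i2}$ of equations~(\ref{eqn:24}), (\ref{eqn:27}) and (\ref{eqn:28}) are assembled from these annuli together with the half-planes $H$ and $H'$. Substituting $\varepsilon = e_{max} + d_{max}$ for $\varepsilon = e_{max}$ therefore enlarges each constraint region consistently, without disturbing the logical form of the argument.

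With these observations in hand, the final step is to re-run the reasoning of Theorem~\ref{thm:02} verbatim under the new reading of $\varepsilon$: condition~(1) excludes the whole of $r_{i1}$ (respectively $r_{i2}$) because no point pair between $r_{q1}$ and that region can meet the third-distance constraint $d_{qi} \in [\hat{d}_{qi} - \varepsilon, \hat{d}_{qi} + \varepsilon]$, while condition~(2) certifies that the complementary region is consistent with it and so contains $p_i$; uniqueness and the absence of flip ambiguity follow exactly as before. I expect the only substantive point --- and the sole obstacle beyond Theorem~\ref{thm:02} --- to be confirming that $d_{max}$ enters the distance interval \emph{additively and symmetrically}, as asserted by equation~(\ref{eqn:30}); granting this, the combined threshold $e_{max} + d_{max}$ plays precisely the role that $e_{max}$ did in the static case, and no further calculation is needed.
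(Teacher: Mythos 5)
Your proposal is correct and matches the paper's own approach: the paper likewise obtains Lemma~\ref{lem:02} by noting (via equation~(\ref{eqn:30})) that measurement error and motion together bound the true distance symmetrically within half-width $e_{max} + d_{max}$, and then "inducing" the lemma from Theorem~\ref{thm:02} with $\varepsilon = e_{max} + d_{max}$ in place of $e_{max}$. Your extra check that all the regions $r_{q1}$, $r_{i1}$, $r_{i2}$ depend on the data only through $\varepsilon$ is a slightly more careful articulation of the same substitution argument the paper leaves implicit.
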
}

\subsection{Localization Algorithm}
Based on the aforementioned criteria, we now present our \emph{Avoiding Flip Ambiguities Localization Algorithm} (AFALA). \ws{The algorithm first selects a random triangle or a special triangle, which is an acute triangle with every edge greater than four fifths of the communication ranges, as the initial triangle and adds its three nodes to a set of localized nodes.} For any unlocalized node, if it is connected to these three localized nodes and the condition of Theorem \ref{thm:01} or Lemma \ref{lem:01} is satisfied, or it is connected to two localized nodes and the condition of Theorem \ref{thm:02} or Lemma \ref{lem:02} is satisfied, it is localized and is moved from the unlocalized nodes set to the localized nodes set. The process is iterated until no new node can join the set of localized nodes. Details are shown in Algorithm~\ref{alg:01}.

\begin{algorithm}[htbp]
  \begin{algorithmic}[1]
    \REQUIRE ~~\\
      A set of unlocalized nodes, $Su$;\\
      A set of localized triangles, $St$;\\
      The threshold value of distance variations, $\varepsilon$;\\
    \ENSURE ~~\\
      A set of localized nodes, $Sl$;\\
    \STATE choose an initial triangle $\triangle opq$ into $St$;
    \STATE add the three nodes of $\triangle opq$ to $Sl$;
    \REPEAT
        \FOR {each node $s_i\in Su$}
            \FOR {each localized triangle $\triangle opq\in St$}
            \IF {$s_i$ connects to two nodes $s_o$ and $s_p$}
                \STATE $r_{q1}$ = $r_{oq} \cap r_{pq} \cap H$;
                \STATE $r_{i1}$ = $r_{oi} \cap r_{pi} \cap H$;
                \STATE $r_{i2}$ = $r_{oi} \cap r_{pi} \cap H'$;
                \STATE $[G_{q1},G_{i1},G_{i2}]$ = the set of grid point coordinate of $r_{q1}$, $r_{i1}$ and $r_{i2}$, respectively;
                \IF {$s_i$ connects to the node $s_q$}
                    \IF {(($\forall x\in G_{q1}$, $\forall y\in G_{i1}$, $d_{xy}< \hat{d}_{qi} - \varepsilon$ $\wedge$ $d_{xy}> \hat{d}_{qi} + \varepsilon$)
                    AND ($\exists u\in G_{q1}$, $\exists v\in G_{i2}$, $\hat{d}_{qi} - \varepsilon < d_{uv} < \hat{d}_{qi} + \varepsilon$))\\
                    OR  (($\exists x\in G_{q1}$, $\exists y\in G_{i1}$, $\hat{d}_{qi} - \varepsilon < d_{xy} < \hat{d}_{qi} + \varepsilon$)
                    AND ($\forall u\in G_{q1}$, $\forall v\in G_{i2}$, $d_{uv}< \hat{d}_{qi} - \varepsilon$ $\wedge$ $d_{uv}> \hat{d}_{qi} + \varepsilon$))}
                        \STATE calculate the estimated location $\hat{p}_i$;
                        \STATE move $s_i$ from $Su$ to $Sl$;
                        \STATE update the distances of the localized subgraph
                        \STATE add $\triangle opi$ to $St$;
                    \ENDIF
                \ELSE
                    \STATE $D_{qi}$ = $min(max(D_q^1) - \varepsilon, max(D_i^1) - \varepsilon)$;
                    \IF {($\forall x\in G_{q1}$, $\forall y\in G_{i1}$, $d_{xy}< D_{qi}$) AND $(\exists u\in G_{q1}$, $\exists v\in G_{i2}$, $d_{uv}> D_{qi}$)}
                        \STATE calculate the estimated location $\hat{p}_i$;
                        \STATE move $s_i$ from $Su$ to $Sl$;
                        \STATE update the distances of the localized subgraph
                        \STATE add $\triangle opi$ to $St$;
                    \ENDIF
                \ENDIF
            \ENDIF
            \ENDFOR
        \ENDFOR
    \UNTIL {the element number of $Sl$ no longer changes}
    \RETURN $Sl$;
  \end{algorithmic}
  \caption{the AFALA algorithm based on both localization criteria.}
  \label{alg:01}
\end{algorithm}

\section{Performance Validation}\label{sec:PerformanceValidation}
In this section, the localization performance of the proposed algorithm is compared with TLA \cite{Eren04} and SELA \cite{Olivia15} in terms of the percentage of localizable nodes and the average estimation error, using simulations implemented in MATLAB 7.14.0.739 (R2012a). Simulations are conducted in a square unit area of 100 $units$ by 100 $units$, where micro-UAVs are uniformly distributed. 100 random instances of micro-UAV network are administered in each group trial, and the average result is taken to ensure more accurate result. \ws{The experimental parameter settings are as follows:}
\ws{\begin{itemize}
\item The number $n$ of micro-UAVs deployed in networks ranges from 50 to 200;
\item A random or special triangle is randomly chosen as the initial triangle;
\item The communication radius $r$ is set to be 20, 30 and 40 $units$; 
\item The measurement errors $e_{ij} \sim N(0, \delta^2)$, such that $\hat{d}_{ij} = |d_{ij} + e_{ij}|$, where $\delta$ is set to be 0, 0.02, 0.2, 0.5, 1, 1.5, 2, 2.5 and 3 $units$;
\item The motion speed of each micro-UAV is randomly chosen from [$-d_{max}$, $d_{max}$] units per time unit, where the maximum motion speed $d_{max}$ is set to be 0:0.2:2 $units$ per time unit;
\item The motion factor is considered or not by the threshold value $\varepsilon$ which is set to be $\delta$ or $\delta + d_{max}$. In order to keep the errors within the maximum allowable bounds, if $|d_{ij} - \hat{d}_{ij}| > \varepsilon$, then $|d_{ij} - \hat{d}_{ij}| = \varepsilon$.
\end{itemize}}

Fig. \ref{fig:4} presents a representative localization scenario with $r$ = 20 and $\delta$ = 0.2. Fig. \ref{fig:4} (a) shows the original graph of the network. Fig. ~\ref{fig:4} (b), (c) and (d) present the localization results of the TLA, SELA and AFALA algorithms in the same scenario. The three blue nodes are randomly chosen as the nodes of the initial triangle. The green and red nodes are the true and estimated locations, respectively, which are connected with black lines, while the remaining black parts are the unlocated nodes. Obviously, the SELA and AFALA algorithms locate more nodes than TLA because of the utilization of bilateration. However, many FAs occur in the TLA and SELA algorithms, while no FAs occur in AFALA. That is because our algorithm, AFALA, fully consider the effect of measurement error on FA. Therefore, our algorithm locates more accurately than TLA and SELA.

We further demonstrate the localization performance in terms of percentage of localized nodes and average estimation error by comparing AFALA with TLA and SELA from aspects such as initial triangle, measurement error, network density and node motion.

\ws{\subsection{Initial Triangle: Random \& Special Triangle}}
\ws{As the starting point of algorithms, the choice of initial triangles determines whether algorithms can localize their first new node. If the node fails to be localized, that means the localization process was over before initiating. Fig. \ref{fig:5} shows the effect of random and special triangle on the localization performance. As is shown in Fig. \ref{fig:5} (a), for every network density, these three algorithms (especially AFALA) with special triangle distinctly localizes more nodes than with random triangle under the same measurement errors. Furthermore, with the increase of measurement errors, the number of nodes localized by AFALA algorithm with random triangle rapidly drops. In contrast, the special triangle slowly decreases the number of localized nodes. Especially, when network density is equal to 100, the number of localized nodes with special triangle almost remains stable, even reaching to about 100\%. Theoretically, the special triangle more easily satisfies the localization criteria of AFALA, thus improving the number of localized nodes.}

\ws{Fig. \ref{fig:5} (b) shows the effect of random and special triangle on the average estimation error. Obviously, the initial triangle tends to have less impact on AFALA than both TLA and SELA. Therefore, the choice of the initial triangle significantly affects the localization result of these methods, and special triangle achieves better localization performance than random triangle in terms of the localization ratio.}

\subsection{Measurement Error}
We demonstrate the impact of measurement errors on the localization performance in terms of percentage of localized nodes and average estimation error in Fig. \ref{fig:6}. Fig. \ref{fig:6} (a) shows the percentage of localized nodes against the measurement error. For $r$ = 20, the AFALA and SELA algorithms locate more nodes than TLA when the measurement error is small, especially SELA. However, as the measurement error increases, the percentage of localized nodes decreases markedly in AFALA. That is mainly because the strict localization conditions of AFALA are difficult to satisfy with small communication radius and large measurement errors. For the large communication radius $r$ = 30 and 40, the performance gap among the three algorithms decreases gradually. Especially for $r$ = 40, they are able to localize almost all the nodes. That is because the large communication radius increases the possibility of using trilateration and reduces the key differences among these algorithms. For the case of AFALA, the localization conditions are also easily satisfied, leading to good outcomes.

Fig. \ref{fig:6} (b) shows the average estimation error against the measurement error where AFALA performs significantly better than TLA and SELA. For every value of $r$, it is apparent that the average estimation error is much less in AFALA than in TLA and SELA. Thus, the proposed algorithm exhibits excellent localization performance in terms of the average estimation error. This confirms the efficacy of AFALA as it addresses the critical issue of measurement errors, a key source of FA, to fulfil the criteria for accurate localization.

\ws{\subsection{Network Density}}
\ws{Fig. \ref{fig:7} illustrates the impact of network density on the localization performance in different localization algorithms. Fig. \ref{fig:7} (a) shows the impact of network density on the localization ratio. For TLA, SELA and AFALA, the number of localized nodes significantly increases with the increase of network density. SELA achieves the fastest growth due to its most flexible localization conditions. On the contrary, AFALA obtains the lowest growth because of its most rigorous conditions. However, the gap among them is gradually reduced with the increase of network density. When network density is larger than 120, AFALA localizes almost all the nodes, reaching nearly the same localization ratio of TLA and SELA.}

\ws{Fig. \ref{fig:7} (b) shows the impact of network density on the average estimation error. Although these three algorithms are comparable in the percentage of localized nodes, there is an obvious difference among them in the average estimation error. As is shown in Fig. \ref{fig:7} (b), the performance curves fluctuate with a fixed estimation error. For any one of these network densities, AFALA outperforms both TLA and SELA by large margins when the measurement error occurs.}

\ws{\subsection{Node Motion}}
\ws{We demonstrate the effect of node motion on the localization performance of TLA, SELA and AFALA in Fig. \ref{fig:8}. As is shown in Fig. \ref{fig:8} (a), compared with TLA, our AFALA locates similar number of nodes, and the motion speed tends to slightly decrease the number of located nodes. That is because our localization criteria depend on the measurement distances, the measurement error and the motion speed to estimate the size of intersection regions for the avoidance of flip ambiguities, which makes it sensitive to these factors.}

\ws{Fig. \ref{fig:8} (b) shows the impact of motion speed on the average estimation error. The average measurement error increases as the maximum motion speed increases. Our localization criteria avoiding flip ambiguities improve the localization accuracy, thus making AFALA to achieve much less average measurement error than TLA and SELA. Therefore, AFALA outperforms TLA and SELA in terms of localization accuracy without obvious loss of localization ratio.}

\section{Conclusion}\label{sec:Conclusion}
In this paper, we have proposed a localization algorithm for swarms micro-UAVs that aims to reduce the occurrence of flip ambiguities (FA). This is critical for collaborative flight of micro-UAVs to prevent collisions that can arise from localization errors caused by FA. For both bilateration and trilateration, under conditions of bounded errors, we analyzed the FA phenomenons using the characteristics of intersecting regions and derived the localization criteria for avoiding FA theoretically. Using these criteria, we developed a corresponding localization algorithm, which we call the \emph{Avoiding Flip Ambiguities Localization Algorithm} (AFALA). \ws{Using simulations implemented in MATLAB, we compared the performance of AFALA against other well-known localization methods, viz. TLA and SELA, to demonstrate its efficiency from four aspects of initial triangle, measurement error, network density and node motion.}

\section*{Acknowledgment}

This work is supported by the National Natural Science Foundation of China (61572231), by the Shandong Provincial Key Research \& Development Project (2017GGX10141), and by Natural Science Foundation of Shandong Province of China (ZR2017BF016).

\bibliographystyle{IEEEtran}
\bibliography{mywsnlib}

\begin{IEEEbiography}[{\includegraphics[width=1in,height=1.125in,clip,keepaspectratio]{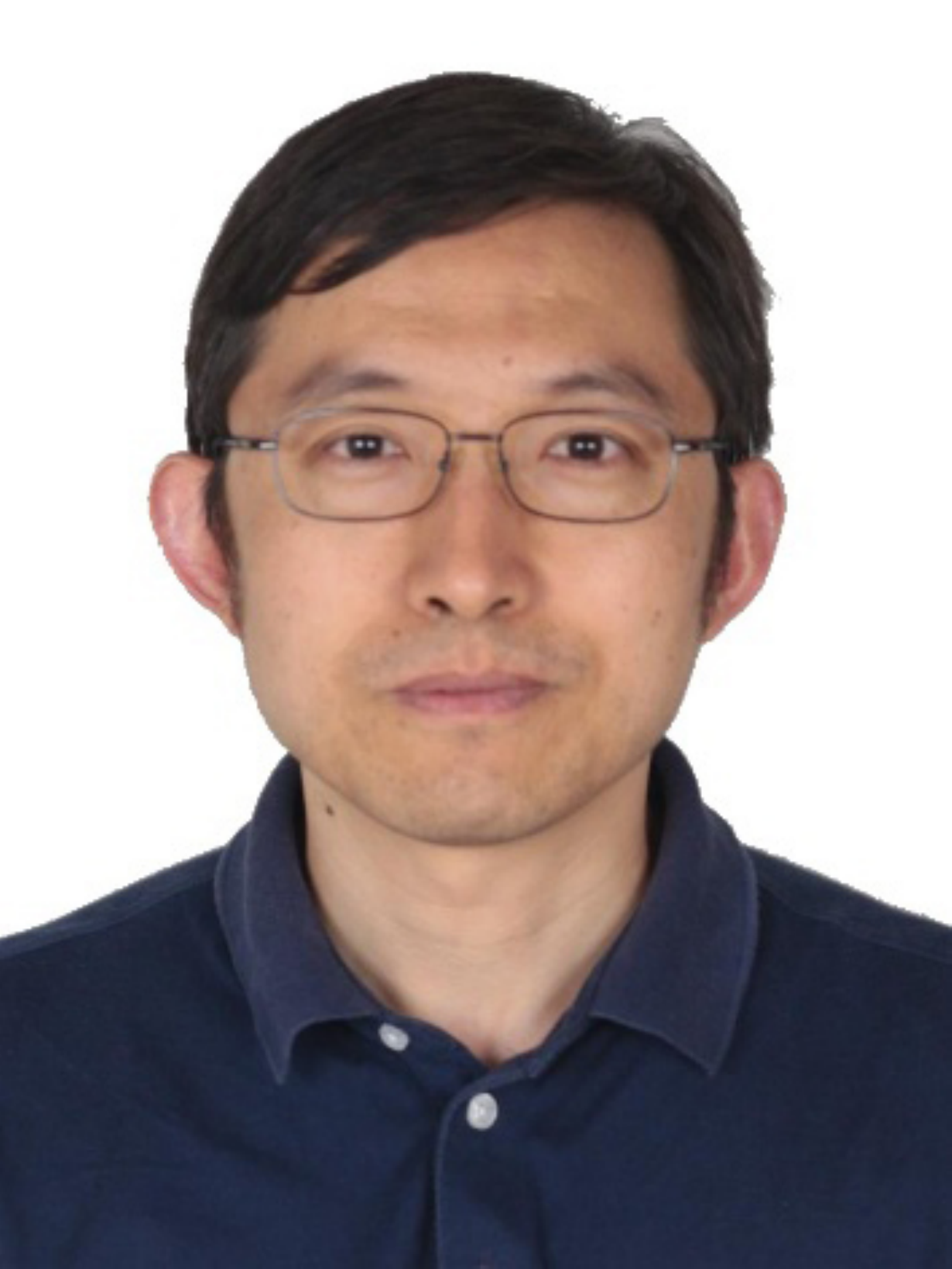}}]{Qingbei Guo}
received the M.S. degree from the School of Computer Science and Technology, Shandong University, Jinan, China, in 2006. He is a member of the Shandong Provincial Key Laboratory of Network based Intelligent Computing and the lecturer in the School of Information Science and Engineering, University of Jinan. He is now a Ph.D. student at the Centre for Vision, Speech and Signal Processing, Jiangnan University, Wuxi, China. His current research interests include wireless sensor networks, deep learning/machine learning, computer vision and neuron networks.
\end{IEEEbiography}

\begin{IEEEbiography}[{\includegraphics[width=1in,height=1.125in,clip,keepaspectratio]{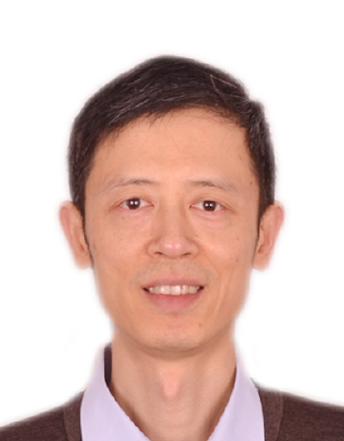}}]{Yuan Zhang}
received his M.S. degree in Communication Systems and Ph.D. degree in Control Theory \& Engineering (Biomedical Engineering) both from Shandong University, China, in 2003 and 2012 respectively. He is currently an Associate Professor at University of Jinan, China. Dr. Zhang was a visiting professor at Computer Science Department, Georgia State University, USA, in 2014. As the first author or corresponding author he has published more than 50 peer reviewed papers in international journals and conference proceedings, 1 book chapters, and 6 patents in theareas of Smart Health and Biomedical Big Data Analytics. He has served as Leading Guest Editor for six special issues of IEEE, Elsevier, Springer and InderScience publications, including IEEE Internet of Things Journal special issue on Wearable Sensor Based Big Data Analysis for Smart Health and IEEE Journal of Biomedical and Health Informatics (JBHI)special issue on Pervasive Sensing and Machine Learning for Mental Health. He has served on the technical program committee for numerous international conferences. He is an associate editor for IEEE Access. Dr.Zhang’s research interests are Wearable Sensing for Smart Health, Machine Learning for Auxiliary Diagnosis, and Biomedical Big Data Analytics. His research has been extensively supported by the Natural Science Foundation of China,China Postdoctoral Science Foundation, and Natural Science Foundation of Shandong Province with total grant funding over 1.4 million RMB. Dr. Zhang is a Senior Member of both IEEE and ACM. For more information, please refer to http://uslab.ujn.edu.cn/index.html
\end{IEEEbiography}


\begin{IEEEbiography}[{\includegraphics[width=1in,height=1.125in,clip,keepaspectratio]{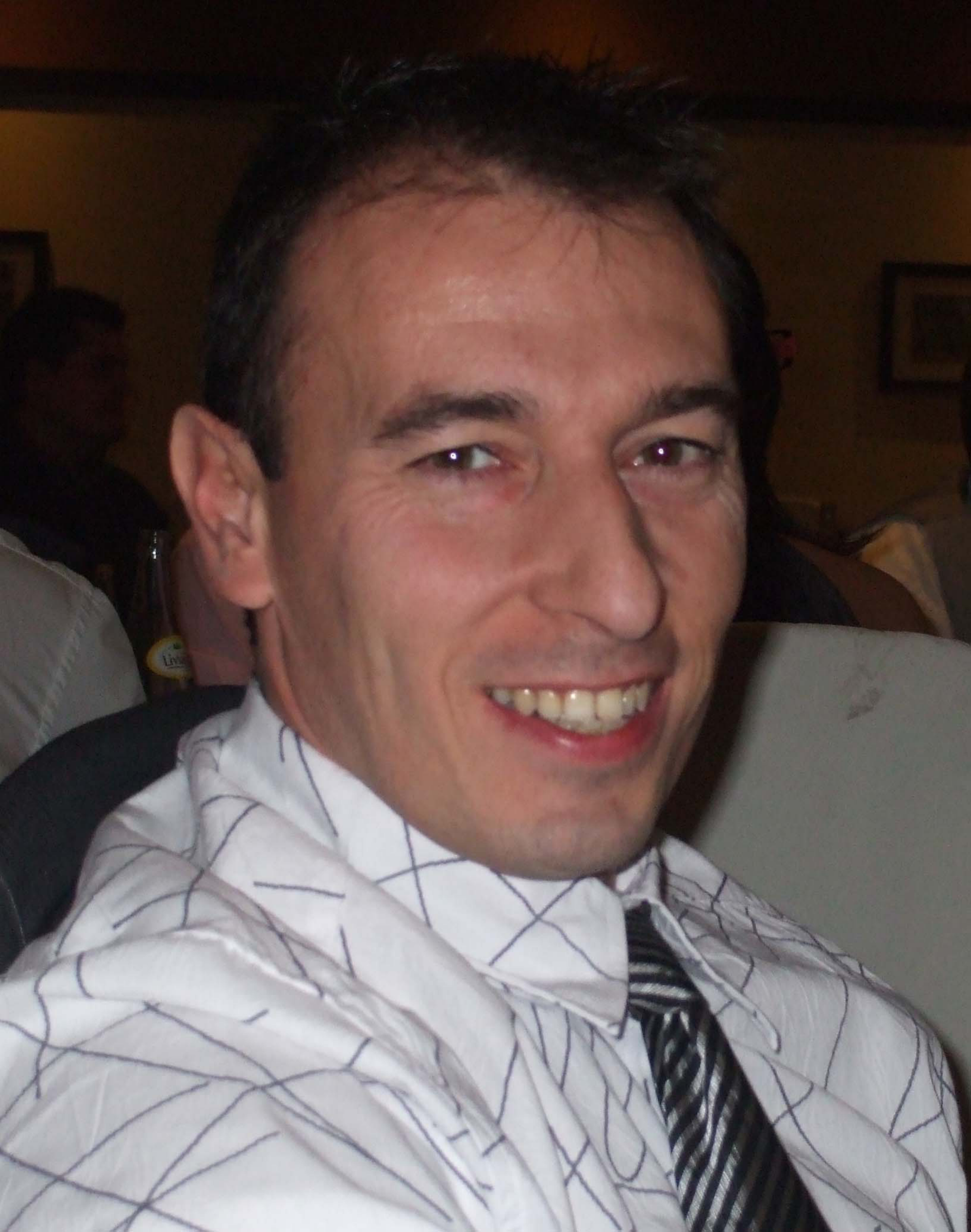}}]{Jaime Lloret}
received his M.Sc. in Physics in 1997, his M.Sc. in electronic Engineering in 2003 and his Ph.D. in telecommunication engineering (Dr. Ing.) in 2006. He is currently Associate Professor in the Polytechnic University of Valencia, Spain. He is the head of the research group ¡±Communications and Networks¡± of the Integrated Management Coastal Research Institute. He has been Internet Technical Committee chair (IEEE Communications Society and Internet society) for the term 2013-2015. He has authored 22 book chapters and has more than 360 research papers published in national and international conferences, international journals (more than 140 with ISI Thomson JCR). He has been the co-editor of 40 conference proceedings and guest editor of several international books and journals. He is editor-in-chief of the ¡±Ad Hoc and Sensor Wireless Networks¡± (with ISI Thomson Impact Factor), and he is (or has been) associate editor of 46 international journals (16 of them with ISI Thomson Impact Factor). He has been involved in more than 320 Program committees of international conferences, and more than 130 organization and steering committees. He leads many national and international projects. He is currently the chair of the Working Group of the Standard IEEE 1907.1. He has been general chair (or co-chair) of 36 International workshops and conferences. He is IEEE Senior and IARIA Fellow.
\end{IEEEbiography}

\begin{IEEEbiography}[{\includegraphics[width=1in,height=1.125in,clip,keepaspectratio]{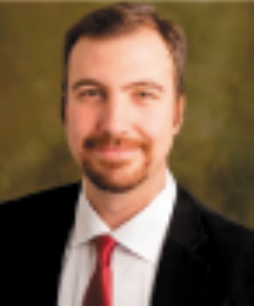}}]{Burak Kantarci}
(SUM’05,M’09,SM’12) is an Assistant Professor with the School of Electrical Engineering and Computer Science at the University of Ottawa. From 2014 to 2016, he was an assistant professor at the ECE Department at Clarkson University, where he currently holds a courtesy appointment. Dr. Kantarci received the M.Sc. and Ph.D. degrees in computer engineering from Istanbul Technical University, in 2005 and 2009, respectively. He received the Siemens Excellence Award in 2005 for his studies in optical burst switching. During his Ph.D. study, he studied as a Visiting Scholar with the University of Ottawa, where he completed the major content of his thesis. He has co-authored over 150 papers in established journals and conferences, and contributed to 12 book chapters. He is the Co-Editor of the book entitled Communication Infrastructures for Cloud Computing. He has served as the Technical Program Co-Chair of seven international conferences/symposia/workshops. He is an Editor of the IEEE Communications Surveys and Tutorials. He also serves as the Vice-Chair of the IEEE ComSoc Communication Systems Integration and Modeling Technical Committee. He is a member of the ACM and a senior member of the IEEE.
\end{IEEEbiography}

\begin{IEEEbiography}[{\includegraphics[width=1in,height=1.125in,clip,keepaspectratio]{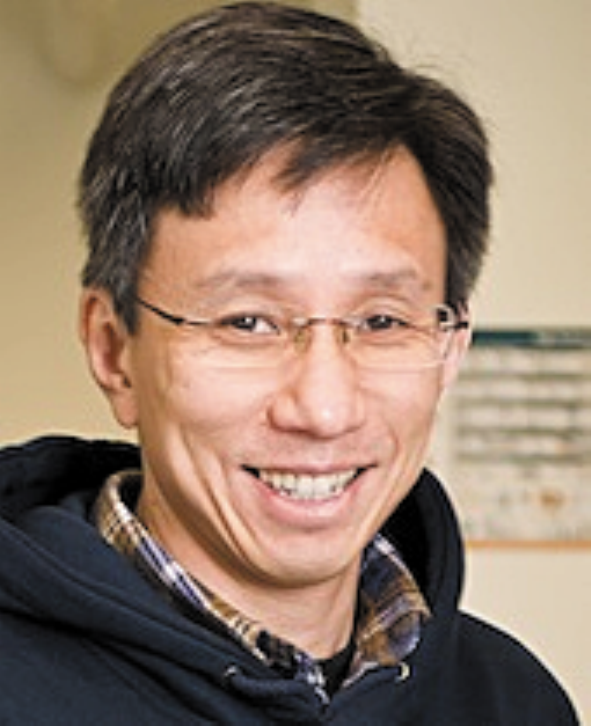}}]{Winston K.G. Seah}
received the Dr.Eng. degree from Kyoto University, Kyoto, Japan, in 1997. He is currently Professor of Network Engineering in the School of Engineering and Computer Science, Victoria University of Wellington, New Zealand. Prior to this, he has worked for more than 16 years in mission-oriented industrial research, taking ideas from theory to prototypes, most recently, as a Senior Scientist (Networking Protocols) in the Institute for Infocomm Research (I$^2$R), Singapore. He is actively involved in research in the areas of mobile ad hoc and sensor networks, and co-developed one of the first Quality of Service (QoS) models for mobile ad hoc networks. His latest research is focused on networking protocols to address the needs of 5G networks, the Internet of Things and other machine-type communications (MTC) technologies, encompassing both long-range communications (LTE-A, Narrowband IoT) as well as, short range technologies (IEEE802.15.4, 6LoWPAN, RPL, etc.) He is a Senior Member of the IEEE and Professional Member of the ACM. His detailed CV is available at http://www.ecs.vuw.ac.nz/~winston/.
\end{IEEEbiography}

\end{document}